\newcommand{\virgolette}[1]{``#1''}
\newtheorem{teorema}{Theorem}[section]
\newtheorem{lemma}[teorema]{Lemma}
\newtheorem{prop}[teorema]{Proposition}
\newtheorem{osss}[teorema]{Remark}
\theoremstyle{definition}
\theoremstyle{remark}
\newtheorem*{assumptionI}{\bf Assumptions I}
\newcommand{\iii}{{\, \vert\kern-0.25ex\vert\kern-0.25ex\vert\, }}
\newcommand{\ffi}{\varphi}
\newcommand{\Z}{\mathbb{Z}}
\newcommand{\N}{\mathbb{N}}
\newcommand{\R}{\mathbb{R}}
\newcommand{\C}{\mathbb{C}}
\newcommand{\Hi}{\mathscr{H}}
\newcommand{\Gi}{\mathscr{G}}
\newcommand{\dd}{\partial}
\newcommand{\ra}{\rangle}
\newcommand{\la}{\langle}
\newcommand{\G}{\Gamma}
\date{}
\title{Controllability of bilinear quantum systems in explicit times via explicit control fields}
\author{Alessandro Duca
	\\ \ \\
	{\small  Institut Fourier, Université Grenoble Alpes}\\
	{\small 100 Rue des Mathématiques, 38610 Gières, France} \\
	{\small \texttt{alessandro.duca@univ-grenoble-alpes.fr}}\\
	{\small ORCID: 0000-0001-7060-1723}
}
\begin{document}

\maketitle
\begin{abstract}
We consider the bilinear Schr\"odinger equation on a bounded one-dimensional domain and we provide explicit times such that the global exact controllability is verified. In addition, we show how to construct controls for the global approximate controllability. 
\end{abstract}

\medskip

\noindent
{\bf AMS subject classifications:} 35Q41, 93C20, 93B05.

\medskip

\noindent
{\bf Keywords:} Schr\"odinger equation, global exact controllability, bilinear quantum systems, explicit controls, explicit times.

\section{Introduction}
In non relativistic quantum mechanics any pure state of a closed system is mathematically represented by a wave function $\psi$ in the unit sphere of a Hilbert space $\Hi$. 
We consider the evolution of a particle confined in a one dimensional bounded region and subjected to an external electromagnetic field that plays the role of a control. A standard choice for such a setting is $\Hi=L^2((0,1),\C)$, while the field is represented by an operator $B$ and by a real function $u$, which accounts its intensity. In this framework, the evolution of $\psi$ is modeled by the bilinear Schr\"odinger equation
\begin{equation}\tag{BSE}\label{mainx1}\begin{split}
\begin{cases}
i\dd_t\psi(t)=A\psi(t)+u(t)B\psi(t),\ \ \ \ \ \ \ \ \ \ \ \ t\in(0,T),\  T>0.\\
\psi(0,x)=\psi^0(x).\\
\end{cases}
\end{split}
\end{equation}
The operator $A=-\Delta$ is the Laplacian with Dirichlet homogeneous boundary conditions ($D(A)=H^2\cap H^1_0$), $B$ is a bounded symmetric operator, $u\in L^2((0,T),\R)$ is a control function and $\psi^0$ the initial state of the system. We call $\G^u_t$ the unitary propagator of the $(\ref{mainx1})$ when it is defined.

\medskip
A natural question of practical implications is whether, given any couple of states, there exists $u$ steering the quantum system from the first one to the second. The bilinear Schr\"odinger equation is said to be exactly controllable when the dynamics precisely reaches the target. We denote it approximately controllable when it is possible to approach the target as close as desired.
The $(\ref{mainx1})$ is said {simultaneously controllable} when more initial states are controllable (exactly or approximately) at the same time with the same $u$.

\medskip

The controllability of the bilinear Schr\"odinger equation has already been studied in the literature and we start by mentioning $\cite{ball}$ by Ball, Mardsen and Slemrod. This seminal work on bilinear systems shows the well-posedess of the equation in $\Hi$ when $u\in L^1_{loc}(\R)$ and an import non-controllability result. In particular, it ensures that the attainable set
$$Z(\psi_0):=\{\psi\in \Hi|\ \exists T>0,\ \exists r>1,  \exists u \in L^r_{loc}((0,T),\R):\ \psi=\G^T_u\psi_0\}$$ from any initial state $\psi_0$ in the unit sphere $S$ of $\Hi$ is contained in a countable union of compact sets. Therefore, $Z(\psi_0)$ has dense complement in $S$ and the $(\ref{mainx1})$ is not exactly controllable in $\Hi$. For this reason, weaker notions of controllability have been used in order to deal with this equation.

\medskip
For instance in $\cite{laurent}$, Beauchard and Laurent prove the well-posedness and the {local exact controllability} of the $(\ref{mainx1})$ in $H^s_{(0)}:=D(A^{\frac{s}{2}})$ for $s= 3$, when $B$ is a multiplication operator for suitable $\mu\in H^3((0,1),\R)$. 

\noindent
In $\cite{morgane1}$, Morancey proves the {simultaneous local exact controllability} of two or three $(\ref{mainx1})$ in $H^3_{(0)}$ for suitable operators $B=\mu\in H^3((0,1),\R)$.

\noindent
In $\cite{morganerse2}$, Morancey and Nersesyan extend the previous result. They achieve the {simultaneous global exact controllability} of finitely many $(\ref{mainx1})$ in $H^4_{(0)}$ for a wide class of multiplication operators $B=\mu$ with $\mu\in H^4((0,1),\R)$.

\noindent
In \cite{mio1}, Duca (or the author) proves the {simultaneous global exact controllability in projection} of infinite $(\ref{mainx1})$ in $H^3_{(0)}$ for bounded symmetric operators $B$.

\medskip

Global approximate controllability results for the bilinear Schr\"odinger equation are provided with different techniques. Adiabatic arguments are considered by Boscain, Chittaro, Gauthier, Mason, Rossi and Sigalotti in $\cite{ugo2}$ and $\cite{ugo3}$.
Controllability results are achieved with Lyapunov techniques by Mirrahimi in $\cite{milo}$ and by Nersesyan in $\cite{nerse2}$. 
Lie-Galerking arguments are used by Boscain, Boussa\"id, Caponigro, Chambrion, Mason and Sigalotti in $\cite{chambrion}$, $\cite{chambrion1}$, $\cite{nabile}$ and $\cite{ugo}$.

\medskip
Most of the existing results focus their efforts on proving the exact controllability of the bilinear Schr\"odinger equation without precising the relative controls and times. In order to exhibit those elements, it is necessary to develop new techniques leading to the local exact controllability. Indeed, the common approach does not provide explicit neighborhoods where the result is valid. As a consequence, when the outcome is extended to the global controllability, any track of the dynamics time and of the corresponding control is lost.
To this purpose, we prove the local exact controllability for specific neighborhoods and times. The result leads to the global exact controllability with explicit times and partially explicit control functions.

\medskip

In more technical terms, the main novelties of the work are the following. First, for any suitable couple of eigenfunctions $\phi_j$ and $\phi_k$ of $A$, we construct controls and times such that the relative dynamics of the $(\ref{mainx1})$ drives $\phi_j$ close to $\phi_k$ as much desired with respect to the $H^3_{(0)}-$norm.
Second, we estimate a neighborhood of $\phi_k$ in $H^3_{(0)}$ where the local exact controllability is satisfied in a given time.
Third, by gathering the two previous results, we define a dynamics steering any eigenstate of $A$ to any other in an explicit time.
In conclusion, we apply the proved results to an example.

\medskip
The work represents a contribution to the application of the control theory to the physical systems modeled by the bilinear Schr\"odinger equation. Nevertheless, many improvements are still required and the provided estimates are far from being optimal. For example, in Section $\ref{esempio}$, we consider an electron trapped in an one-dimensional guide of length $\sim 10^{-3}$ meters and subjected to an external electromagnetic field. We show a suitable control field driving the state of the electron from the first excited state to the ground state in a time $T\sim 10^{116}$ seconds. The achieved time is way too large for any practical implementation, however future optimization may lead to more reasonable estimates as we explain afterwards.

\subsection{Framework and main results}\label{frame}
Let us consider the $(\ref{mainx1})$ in the Hilbert space $\Hi=L^2((0,1),\C)$ with 
$$D(A)=H^2((0,1),\C)\cap H^1_0((0,1),\C),\ \ \ \ \ A\psi=-\Delta\psi,\ \ \ \ \ \forall \psi\in D(A).$$
We denote $\la\cdot,\cdot\ra$ the scalar product in $\Hi$ 
and $\|\cdot\|$ the corresponding norm. Let $\{\phi_j\}_{j\in\N^*}$ be an orthonormal basis composed by eigenfunctions of $A$ associated to the eigenvalues $\{\lambda_j\}_{j\in\N^*}$ ($\lambda_k=\pi^2 k^2$) and \begin{equation}\label{eigen}
\phi_j(t)=e^{-iAt}\phi_j=e^{-i\lambda_j t}\phi_j.\end{equation}

\noindent
For $s>0$, we define $h^s(\C)=\Big\{\{x_j\}_{j\in\N^*}\subset{\C}\big|\ \sum_{j=1}^{\infty}|j^s x_j|^2<\infty\Big\}$ equipped with the norm $\Big\|\{x_j\}_{j\in\N^*}\Big\|_{(s)}=\Big(\sum_{j=1}^\infty|j^s x_j|^2\Big)^\frac{1}{2}$ for every $\{x_j\}_{j\in\N^*}\in h^s$ and
$$H^s_{(0)}=H^s_{(0)}((0,1),\C):=D(A^{\frac{s}{2}}),\ \ \ \ \ \ \ \ \|\cdot\|_{(s)}=\Big(\sum_{k=1}^\infty|(\pi k)^s\la\phi_k,\cdot\ra|^2\Big)^\frac{1}{2}.$$
\medskip
Let $H_0^m:=\{\psi \in H^m:\ \dd_x^j\psi(0)=\dd_x^j\psi(1)=0,\ \forall j\in\N^*, \ j\leq m\}$ with $H^m:=H^m((0,1),\C)$ and $m\in\N^*$. We underline that $H^2_{(0)}=H^2\cap H^1_0$. 

\medskip

For two Banach spaces $X$ and $Y$, we denote $L(X,Y)$ the space of the linear bounded operators mapping $X$ in $Y$ and equipped with the norm $\iii \cdot\iii_{L(X,Y)}$. In addition, for $s>0,$ we call $\iii \cdot \iii:=\iii \cdot\iii_{L(\Hi,\Hi)}$ and $$\iii \cdot \iii_{(s)}:=\iii \cdot\iii_{L(H^s_{(0)},H^s_{(0)})},\ \  \ \ \iii \cdot \iii_3:=\iii \cdot\iii_{L(H^3_{(0)},H^3\cap H^1_{0})}.$$
We consider $H^3\cap H^1_{0}$ equipped with the norm $\|\cdot\|_{H^3\cap H^1_{0}}=\sqrt{\sum_{j=1}^3\|\dd_x^j\cdot\|^2}.$

\needspace{3\baselineskip}

\begin{assumptionI}
	The bounded symmetric operator $B$ satisfies the following conditions.
	\begin{enumerate}
		\item For every $k\in\N^*$, there exists $C_k>0$ such that $|\la\phi_j,B\phi_k\ra|\geq\frac{C_k}{j^3}$ for every $j\in\N^*$.
		\item $Ran(B|_{D(A)})\subseteq D(A)$ and $Ran(B|_{H_{(0)}^3})\subseteq H^3\cap H^1_{0}.$
	\end{enumerate}
\end{assumptionI}

For $B_{j,k}:=\la\phi_j,B\phi_k\ra$ with $j,k\in\N^*$, we have $\{B_{j,k}\}_{j\in\N^*},\{B_{j,k}\}_{k\in\N^*}\in\ell^2(\C).$ 
Before proceeding with the main results of the work, we introduce the following notations. For $n,j,k\in\N^*$ and $t\in [0,T]$ with $T>0$, we denote
\begin{equation}\begin{split}\label{definizioni}
&T^*:=\frac{\pi}{|B_{k,j}|},\\[5pt]
u_n(t):=&\frac{\cos\big((k^2-j^2)\pi^2t\big)}{n},\\[5pt]
b:=\iii B\iii_{(2)}^6\iii B\iii &\iii B\iii_3^{16}\max\big\{\iii B\iii,\iii B\iii_3\big\},\\[5pt]
E(j,k):=e^{\frac{6\iii B\iii_{(2)}}{|B_{j,k}|}}&|k^2-j^2|^5k^{24}\max\{j,k\}^{24}C_k^{-16}|B_{j,k}|^{-7},\\[5pt]
C':=\sup_{(l,m)\in \Lambda '}&\left\{\left|\sin\left(\pi\frac{|l^2-m^2|}{|k^2-j^2|}\right)\right|^{-1}\right\},\\
\end{split}\end{equation}
\begin{equation*}\begin{split}
\Lambda ':=\big\{&(l,m)\in (\N^*)^2 :\ \{l,m\}\cap\{j,k\}\neq\emptyset,\ |l^2-m^2|\leq\frac{3}{2}|k^2-j^2|,\\
&\ |l^2-m^2|\neq|k^2-j^2|,\ \la\phi_l,B\phi_{m}\ra\neq 0\big\}.
\end{split}\end{equation*}

\medskip
The following theorem represents the main result of the work, which ensure the global exact controllability between eigenfunctions. We underline that the control time is explicit and $u_n$ defines a dynamics steering the initial data to the target one up to a defined distance when $n$ is sufficiently large. 
\begin{teorema}\label{mainteo}
	Let $j,n\in\N^*$ and $k\in\N^*$ be such that $k\neq j$ and 
	\begin{equation}\label{diosbor}m^2-k^2\neq k^2-l^2,\ \ \ \ \  \  \ \ \ \ \ \ \ \forall m,l\in\N^*,\ m,l\neq k.\end{equation}
	Let $B$ satisfy Assumptions I. If $n\geq 2^{51}\pi^{19}b\ (1+C') E(j,k),$ then 
	$$\exists \theta\in\R\ \ \ \ :\ \ \ \big\|\G_{nT^*}^{u_n}\phi_j -e^{i\theta}\phi_k\big\|_{(3)}
	\leq 3C_k^2(16k^3\iii B\iii_3^2)^{-1}$$
	with $C_k$ from Assumptions I. Moreover, there exists $u\in L^2((0,\frac{4}{\pi}),\R)$ so that 
	$$\|u\|_{L^2((0,\frac{4}{\pi}),\R)}\leq \frac{C_k}{\iii B\iii_3^2 k^3},	\ \ \ \ \ \ \ \ \ \ \G_{\frac{4}{\pi}}^{u}\G_{nT^*}^{u_n}\phi_j=e^{i\theta}\phi_k.$$
\end{teorema}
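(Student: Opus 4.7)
I would argue in two stages: a quantitative averaging analysis of the driven dynamics over the long window $[0,nT^*]$, producing an approximate transfer $\phi_j\to e^{i\theta}\phi_k$ in $H^3_{(0)}$, followed by a constructive local exact controllability step of duration $4/\pi$ that closes the remaining gap.

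For the first stage I would work in the interaction picture $\tilde\psi=e^{iAt}\psi$ and expand $\tilde\psi=\sum_l c_l\phi_l$. Using $2\cos((\lambda_k-\lambda_j)t)=e^{i(\lambda_k-\lambda_j)t}+e^{-i(\lambda_k-\lambda_j)t}$ with $\lambda_l=\pi^2 l^2$, the matrix element $(m,l)$ of the driving operator carries the two phases $\exp(i\pi^2(m^2-l^2\pm(k^2-j^2))t)$. Only the pair $\{(j,k),(k,j)\}$ produces a vanishing phase, and the non-degeneracy hypothesis (\ref{diosbor}) also rules out the secondary degeneracies $\lambda_m+\lambda_l=2\lambda_k$ that appear at second order. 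Integrating the remaining oscillating exponentials by parts, the leading dynamics on $\spn\{\phi_j,\phi_k\}$ reduces to the Rabi system $i\dot c_j=\frac{B_{j,k}}{2n}c_k$, $i\dot c_k=\frac{B_{k,j}}{2n}c_j$, whose explicit solution gives $c_j(t)=\cos(|B_{j,k}|t/(2n))$. At $t=nT^*=n\pi/|B_{k,j}|$ one thus obtains $c_j=0$ and $|c_k|=1$, i.e.\ the complete transfer $\phi_j\mapsto e^{i\theta}\phi_k$ with $\theta$ determined by $\arg B_{k,j}$.

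The main technical difficulty is upgrading this heuristic into a quantitative $H^3_{(0)}$-bound with fully explicit constants. Each non-resonant contribution, after the integration by parts, inherits a denominator of the form $\pi^2(m^2-l^2\pm(k^2-j^2))$. For $(m,l)\in\Lambda'$ (touching $\{j,k\}$ and close in frequency) this denominator is bounded from below by $\pi\cdot|\sin(\pi|l^2-m^2|/|k^2-j^2|)|$, whence the appearance of the factor $C'$ in the threshold; for $(m,l)$ far from resonance it grows quadratically and yields summability of the tail. Propagating the estimate at the $H^3_{(0)}$-level uses Assumption I.2 to keep the iterates inside $H^3\cap H^1_0$, Assumption I.1 (through $|B_{j,k}|\geq C_k/\max\{j,k\}^3$) to dominate the high Fourier modes, and a Gr\"onwall step that produces the exponential factor $e^{6\iii B\iii_{(2)}/|B_{j,k}|}$ contained in $E(j,k)$. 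The remaining polynomial powers of $k$, $\max\{j,k\}$, $C_k^{-1}$ and $|B_{j,k}|^{-1}$ in $E(j,k)$, together with the prefactor $2^{51}\pi^{19}b(1+C')$, are the fingerprints of repeated Cauchy--Schwarz and triangle inequalities combined with $\iii B\iii_3$-dependent operator estimates at the different orders of regularity. Once $n$ exceeds this threshold, the cumulative remainder is controlled by $3C_k^2/(16k^3\iii B\iii_3^2)$, which delivers the first conclusion.

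For the exact-controllability stage I would invoke a quantitative local exact controllability statement around $e^{i\theta}\phi_k$, in the spirit of \cite{laurent,mio1}, on the explicit $H^3_{(0)}$-ball of radius $3C_k^2/(16k^3\iii B\iii_3^2)$. The corrective control $u$ is built by solving a moment problem on the Fourier coefficients $\la\phi_l,\cdot\ra$; Ingham-type inequalities applied to $\{\lambda_l-\lambda_k\}_{l\in\N^*}$ provide solvability on the window $[0,4/\pi]$ together with the bound $\|u\|_{L^2}\leq C_k/(\iii B\iii_3^2 k^3)$. Composing the two propagators then yields $\G_{4/\pi}^u\G_{nT^*}^{u_n}\phi_j=e^{i\theta}\phi_k$. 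The genuinely delicate point throughout the proof is keeping every constant explicit and independent of $n$ in the critical $H^3_{(0)}$-regularity, which is precisely what dictates the shape of the threshold $n\geq 2^{51}\pi^{19}b(1+C')E(j,k)$.
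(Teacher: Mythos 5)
Your two-stage architecture (a long averaging phase on $[0,nT^*]$ followed by an explicit local exact controllability step of length $4/\pi$) coincides with the paper's, but both stages have a gap precisely where the explicit constants of the statement are supposed to be produced. In the first stage you propose to get smallness of $\|\G_{nT^*}^{u_n}\phi_j-e^{i\theta}\phi_k\|_{(3)}$ by ``propagating the estimate at the $H^3_{(0)}$-level'' with a Gr\"onwall step. A Gr\"onwall argument in $H^3_{(0)}$ only yields \emph{boundedness} of the propagator (the constant $e^{C\|u_n\|_{L^1}\iii B\iii_3}$ is $O(1)$ in $n$); it does not convert the rotating-wave cancellations into an $H^3_{(0)}$-small error, and carrying the oscillatory-integral analysis directly in $h^3$ is exactly the step you cannot do with Assumptions I alone, since $|B_{m,k}|\geq C_k m^{-3}$ means $B\phi_k$ has no decay left in the $h^3$ weights. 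The paper's actual mechanism is different: the averaging analysis (imported from \cite{chambrion2} rather than redone) gives smallness only in the $\Hi$-norm and only at some time $T_n$ near $nT^*$ (Proposition \ref{approssimaH}); a separate Kato-type propagation-of-regularity argument gives a uniform a priori bound on $\|\G_t^{u_n}\phi_j\|_{(4)}$ (this, via the $BV$-norm of $u_n$, is where $e^{6\iii B\iii_{(2)}/|B_{j,k}|}$ comes from); and the two are combined through the interpolation $\|f\|_{(3)}^8\leq\|f\|^2\,\|f\|_{(4)}^6$, which is also why the theorem's error bound arises as an eighth power (Propositions \ref{approssimaH3} and \ref{approssimaH3T}). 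Without this interpolation device, or an equivalent one, your first conclusion is not established.

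In the second stage, solving a moment problem with Ingham inequalities only inverts the \emph{linearized} input-to-state map at $u=0$; it gives neither exact controllability on an explicit ball nor the bound $\|u\|_{L^2((0,4/\pi),\R)}\leq C_k/(\iii B\iii_3^2k^3)$. The paper's Proposition \ref{lclneigh} obtains the explicit radius $3C_k^2/(16k^3\iii B\iii_3^2)$ from a quantitative inverse-mapping lemma: an explicit lower bound $M=C_k/\widetilde C(4/\pi)$ for the linearization (computed from the Ingham constants at $T=4/\pi$ together with Assumption I.1), and an explicit Lipschitz constant $M_1\leq M/2$ for the quadratic Duhamel remainder $H_l(u)$ on an explicit ball of controls; the radius of the reachable ball is $r(M-M_1)$ and the $L^2$-size of $u$ is the radius $r$ of that control ball. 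This comparison of $M$ and $M_1$ is the heart of the explicit local result and is absent from your plan. Finally, the local proposition steers $\phi_k$ \emph{into} the ball, so you also need the time reversibility of the equation to reorder the composition into $\G_{4/\pi}^{u}\G_{nT^*}^{u_n}\phi_j=e^{i\theta}\phi_k$.
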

\begin{proof}
	See Section $\ref{proofmainteo}$.\qedhere
\end{proof}

\medskip
Examples of $k\in\N^*$ satisfying the relation $(\ref{diosbor})$ are those numbers $k\leq 3$. However, Theorem $\ref{mainteo}$ can be generalized for every $k\in\N^*$ by defining, for every $\phi_j$ and $\phi_k$, a dynamics steering $\phi_j$ in $\phi_k$ and passing through $\phi_1$. 
In addition, the choice of $\{u_n\}_{n\in\N^*}$ can be replaced by other $\frac{2\pi}{|\lambda_k-\lambda_j|}-$periodic controls by refering to $\cite{chambrion2}$, which is used in the proof of the theorem.

\medskip
Theorem $\ref{mainteo}$ is not optimal and its purpose is to exhibit readable results for general $B$, $j$ and $k$.
For any specific choice of $B$, $j$ and $k$, it is possible to retrace the proof in order to obtain sharper bounds by using stronger intermediate estimates. We briefly treat the example of $B:\psi\mapsto x^2\psi$, $j=2$ and $k=1$ in Section $\ref{esempio}$. In addition, even though the phase appearing in the result is not particularly relevant from a physical point of view, it can be avoided by rotating the state of its phase (provided in $\cite{chambrion2}$).

\subsection{Well-posedness}\label{well}
As mentioned in the introduction, Beauchard and Laurent prove in $\cite{laurent}$ the well-posedness of the bilinear Schr\"odinger equation in $H^3_{(0)}$. The result is provided with $B$ a multiplication operator for a suitable function $\mu\in H^3((0,1),\R)$. We rephrase the result in the following proposition.

\needspace{3\baselineskip}
\begin{prop}{$\cite[Lemma\ 1; \ Proposition\ 2]{laurent}$}\label{laura} $\text{   }$

	\noindent
	{\bf 1)} Let the function $\widetilde f$ be so that $\widetilde f(s,\cdot)\in H_0^1\cap H^3$ for almost every $s\in [0,T]$ with $T >0$ and $\widetilde f\in L^2((0,T), H_0^1\cap H^3)$. The map $G:t\mapsto\int_0^te^{iAs} \widetilde f(s) ds$ belongs to $C^0([0,T], H^3_{(0)})$. Moreover, 
	$$\|G\|_{L^{\infty}((0,T),H^3_{(0)})}\leq c_1(T)\|\widetilde f\|_{L^2((0,T),H^3\cap H^1_{(0)})},$$ where the constant $c_1(T)$ is uniformly bounded with $T$ in bounded intervals.
	
	\vspace{5pt}

	\noindent
	{\bf 2)} Let $\mu\in H^3((0,1),\R)$, $T >0$, $\psi^0\in H^3_{(0)}$ and $u\in L^2((0,T),\R)$. There
	exists a unique mild solution of the ($\ref{mainx1}$) in
	$H^3_{(0)}$ when $B$ is a multiplication operator with respect to $\mu$, i.e. there exists $\psi\in C^0([0,T],H^3_{(0)})$ such that
	\begin{equation}\label{form}
	\psi(t)=e^{-iA t}\psi^0-i\int_0^t e^{-iA(t-s)}u(s)\mu\psi(s)ds,\ \ \ \ \ \forall t\in[0,T].
	\end{equation}
	Moreover, for every $R>0$, there exists $C=C(T,\mu,R)>0$ such that, for every $\psi^0\in H^3_{(0)}$, if $\|u\|_{L^2((0,T),\R)}<R ,$ then the solution satisfies
	$$\|\psi\|_{C^0([0,T],H^3_{(0)})}\leq C\|\psi^0\|_{(3)},\ \ \ \ \ \|\psi(t)\|_\Hi =\|\psi^0\|_\Hi\ \ \ \forall t\in[0,T].$$
\end{prop}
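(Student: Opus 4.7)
The plan is to prove Part 1 via Fourier expansion in the eigenbasis of $A$ combined with a Bessel (Ingham-type) estimate on complex exponentials, and then to deduce Part 2 from Part 1 by a Banach fixed-point argument for the Duhamel map on $C^0([0,T],H^3_{(0)})$.

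For Part 1, I expand $\widetilde f(s,x)=\sum_{k\geq 1}\widetilde f_k(s)\phi_k(x)$ in the orthonormal basis $\{\phi_k\}$, so that
$$G(t) = \sum_{k\geq 1}\phi_k\int_0^t e^{i\lambda_k s}\widetilde f_k(s)\,ds,$$
and the $H^3_{(0)}$-norm of $G(t)$ is controlled by
$$\sum_{k\geq 1}(\pi k)^6\Bigl|\int_0^t e^{i\lambda_k s}\widetilde f_k(s)\,ds\Bigr|^2.$$
Because the gaps $\lambda_{k+1}-\lambda_k=\pi^2(2k+1)$ grow linearly, the family $\{e^{i\lambda_k s}\}_{k\geq 1}$ satisfies a Bessel-type inequality on $L^2(0,T)$ with a constant depending continuously on $T$; this is a classical consequence of Ingham's theorem, or can be established by summing the explicit inner products $\int_0^T e^{i(\lambda_k-\lambda_j)s}\,ds$. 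Applying it to $g_k(s):=(\pi k)^3\widetilde f_k(s)$ bounds the displayed sum by $c_1(T)^2\int_0^T\sum_k(\pi k)^6|\widetilde f_k(s)|^2\,ds$, and this right-hand side is equivalent to $\|\widetilde f\|_{L^2((0,T),H^3\cap H^1_{(0)})}^2$: the assumption $\widetilde f(s,\cdot)\in H_0^1$ eliminates the boundary contribution when one transfers two derivatives onto $\phi_k$ via integration by parts, and the remaining derivative is absorbed in the $H^3$ norm. Continuity of $t\mapsto G(t)$ then follows from dominated convergence on the partial sums.

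For Part 2, I define the Duhamel map
$$\Phi(\psi)(t)=e^{-iAt}\psi^0-i\,e^{-iAt}\int_0^t e^{iAs}u(s)\mu\psi(s)\,ds$$
on $C^0([0,T],H^3_{(0)})$. Part 1 with $\widetilde f(s)=u(s)\mu\psi(s)$ applies as soon as $u\mu\psi\in L^2((0,T),H^3\cap H^1_{(0)})$, which reduces to the multiplicative estimate $\|\mu\psi\|_{H^3\cap H^1_0}\leq C(\mu)\|\psi\|_{(3)}$. Differentiating $\mu\psi$ up to three times produces the four terms $\mu^{(j)}\psi^{(3-j)}$, each of which is an $L^2\cdot L^\infty$ or $L^\infty\cdot L^2$ product using $\mu\in H^3$, $\psi\in H^3_{(0)}$, and the one-dimensional embedding $H^1\hookrightarrow L^\infty$; the boundary condition $(\mu\psi)(0)=(\mu\psi)(1)=0$ is inherited from $\psi\in H^1_0$. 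This yields
$$\|\Phi(\psi)\|_{C^0([0,T],H^3_{(0)})}\leq\|\psi^0\|_{(3)}+c_1(T)C(\mu)\|u\|_{L^2}\|\psi\|_{C^0([0,T],H^3_{(0)})}$$
and a matching Lipschitz estimate for $\Phi(\psi_1)-\Phi(\psi_2)$. Hence $\Phi$ is a contraction on a ball of $C^0([0,T_0],H^3_{(0)})$ for $T_0$ so small that $c_1(T_0)C(\mu)\|u\|_{L^2}<1/2$; iterating on a partition of $[0,T]$ whose length depends only on $T$, $\mu$, $R$ produces the solution on the whole of $[0,T]$ and yields the announced bound with constant $C(T,\mu,R)$. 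The $L^2$-conservation $\|\psi(t)\|_\Hi=\|\psi^0\|_\Hi$ follows formally from $\frac{d}{dt}\|\psi\|^2=0$ by self-adjointness of $A$ and of multiplication by the real function $\mu$; rigorously, one approximates $u$ and $\psi^0$ by smooth data, checks the identity on classical solutions, and passes to the limit through continuous dependence in $\Hi$ (which holds because $B$ is bounded on $\Hi$, since $H^3\hookrightarrow L^\infty$).

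The main obstacle is the Bessel estimate in Part 1 with constant $c_1(T)$ uniformly bounded on bounded intervals: the algebra linking $\sum_k(\pi k)^6|\widetilde f_k(s)|^2$ to the $H^3\cap H^1_{(0)}$-norm requires careful integration by parts, using $\widetilde f(s,\cdot)\in H^1_0$ at the first step to kill boundary terms while accepting those that appear when one transfers the remaining two derivatives, and one must check that the resulting constant in Ingham's inequality is indeed continuous in $T$ rather than blowing up at $T=0$ (this is where the restriction to bounded intervals enters). Once Part 1 is in hand, Part 2 is essentially routine; the only technical care needed is tracking the dependence of $C(T,\mu,R)$ on the number of iteration steps required to cover $[0,T]$, which is explicit once $T_0$ is fixed in terms of $\mu$ and $R$.
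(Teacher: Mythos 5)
There is a genuine gap in your Part 1, and it sits exactly where the real content of the lemma lies (note that the paper does not re-prove this proposition but quotes it from Beauchard--Laurent, so the comparison is with their argument). Your chain of estimates bounds $\|G(t)\|_{(3)}^2$ by a constant times $\int_0^T\sum_k(\pi k)^6|\widetilde f_k(s)|^2\,ds$, and you then assert that this quantity is equivalent to $\|\widetilde f\|^2_{L^2((0,T),H^3\cap H^1_0)}$. That assertion is false: $\sum_k(\pi k)^6|\widetilde f_k(s)|^2$ is by definition $\|\widetilde f(s)\|_{(3)}^2$, i.e.\ the $H^3_{(0)}$-norm, and $H^3_{(0)}=D(A^{3/2})$ is a \emph{strict} subspace of $H^3\cap H^1_0$ (it additionally forces $\partial_x^2\widetilde f(s,0)=\partial_x^2\widetilde f(s,1)=0$), so for a generic $\widetilde f(s,\cdot)\in H^3\cap H^1_0$ this sum is infinite. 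Concretely, integrating $\langle\phi_k,\widetilde f(s)\rangle$ by parts three times, the first two boundary terms vanish (using $\widetilde f(s,\cdot)\in H^1_0$ and the vanishing of $\sin(k\pi\cdot)$ at the endpoints) but the third does not: up to constants and complex conjugation one finds
$$(\pi k)^3\langle\phi_k,\widetilde f(s)\rangle=\sqrt2\bigl((-1)^k\partial_x^2\widetilde f(s,1)-\partial_x^2\widetilde f(s,0)\bigr)-\sqrt2\int_0^1\cos(k\pi x)\,\partial_x^3\widetilde f(s,x)\,dx,$$
and the bracketed boundary contribution is a bounded but not square-summable sequence in $k$. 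This is precisely why a term-by-term estimate (which is all that ``applying a Bessel inequality to $g_k$'' can amount to: a Bessel inequality concerns one fixed function tested against the whole family $\{e^{i\lambda_k s}\}_k$, not a different $g_k$ for each $k$) cannot close the argument.

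The correct proof keeps the decomposition above inside the time integral: the cosine part is handled by Cauchy--Schwarz in $t$ and Parseval in $x$ and is controlled by $\|\partial_x^3\widetilde f\|_{L^2_{t,x}}$, while the boundary part reduces to estimating $\sum_k\bigl|\int_0^te^{i\lambda_ks}h(s)\,ds\bigr|^2$ for the two single functions $h=\partial_x^2\widetilde f(\cdot,0)$ and $h=\partial_x^2\widetilde f(\cdot,1)$, which lie in $L^2(0,T)$ by trace; it is to \emph{these} that the Ingham-type inequality for $\{e^{i\lambda_k s}\}_k$ must be applied (using that the gaps $\lambda_{k+1}-\lambda_k=(2k+1)\pi^2$ grow, with finitely many low modes treated separately when $T$ is small, which is where the uniformity of $c_1(T)$ on bounded intervals comes from). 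This is the only place where the oscillation of the exponentials is genuinely used, and your sketch uses it nowhere that matters. Your Part 2 is the standard fixed-point argument and is fine once Part 1 is repaired: the multiplicative estimate $\|\mu\psi\|_{H^3\cap H^1_0}\leq C(\mu)\|\psi\|_{(3)}$, the iteration over a partition of $[0,T]$ adapted to $\|u\|_{L^2}<R$, and the $L^2$-conservation by approximation are all as you describe.
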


\begin{osss}\label{lauraa}
	The outcome of Proposition $\ref{laura}$ is not only valid for multiplication operators, but also for other suitable operators $B$. Indeed, the same proofs of $\cite[Lemma\ 1]{laurent}$ and $\cite[Proposition\ 2]{laurent}$ lead to the well-posedness of the $(\ref{mainx1})$ when $B$ is a bounded symmetric operator such that 
	$$B\in L(H^3_{(0)},H^3\cap H^1_0),\ \ \ \ \ \ B\in L(H^2_{(0)}),$$
	which are verified if $B$ satisfies Assumptions I, thanks to $\cite[Remark\ 1.1]{mio1}$.
\end{osss}

\subsection{Scheme of the work}
In Section $\ref{globasec}$, Proposition $\ref{lcl}$ ensures the local exact controllability in $H^3_{(0)}$ and we exhibit a neighborhood where it is verified in Proposition $\ref{lclneigh}$.
We prove Theorem $\ref{mainteo}$ in Section $\ref{proofmainteo},$ while we apply the main results to a physical system in Section $\ref{esempio}$.
In Section $\ref{conclusion},$ we comment the outcomes of Theorem $\ref{mainteo}$. 
We provide some intermediate results in Appendix $\ref{pallose}$, while in Appendix $\ref{appmome1}$, we expose some tools required in the work.

\section{Local exact controllability in $H^3_{(0)}$}\label{globasec}

Let us provide a brief proof of the local exact controllability in $H^3_{(0)}$ by rephrasing the existing results of local exact controllability as $\cite{laurent}$, $\cite{morgane1}$ and $\cite{morganerse2}$. Our purpose is to introduce the tools that we use in the proof of Theorem $\ref{mainteo}$. For $\psi\in H^3_{(0)}$ and $\epsilon>0$, we define
\begin{equation}\label{dindo}\begin{split}
\widetilde B_{H^3_{(0)}}(\psi, \epsilon):= \big\{&\widetilde\psi\in  H^3_{(0)}\big|\ \|\widetilde\psi\|=\|\psi\| ,\ \|\widetilde\psi-\psi\|_{(3)}<\epsilon\big\}.
\end{split}\end{equation}

\begin{prop}\label{lcl}
	Let $B$ satisfy Assumptions I. For every $l\in\N^*$ such that
	\begin{equation}\label{diosbor1}m^2-l^2\neq l^2-n^2,\ \ \ \ \  \  \ \ \ \ \ \ \ \forall m,n\in\N^*,\ m,n\neq l,\end{equation}
	there exist $T>0$ and $\epsilon>0$ such that, for every $\psi\in \widetilde B_{H^3_{(0)}}(\phi_l(T), \epsilon)$, there exists a control function $u\in L^2((0,T),\R)$ such that $\psi= \G^u_T\phi_l.$
\end{prop}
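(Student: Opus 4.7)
The plan is to linearize the end-point map
\begin{equation*}
\Phi_T:L^2((0,T),\R)\longrightarrow H^3_{(0)},\qquad \Phi_T(u)=\G_T^u\phi_l,
\end{equation*}
at $u=0$ and then to apply the inverse mapping theorem in the spirit of the Beauchard--Laurent strategy. Well-posedness and $C^1$ regularity of $\Phi_T$ near $u=0$ (with values in the $L^2$-sphere of $H^3_{(0)}$) follow from Proposition~\ref{laura} together with Remark~\ref{lauraa}, which upgrades the conclusion to $B$ merely satisfying Assumptions~I. Duhamel's formula then yields
\begin{equation*}
d\Phi_T(0)\cdot v \;=\; -i\int_0^T e^{-iA(T-s)}\,v(s)\,B\phi_l(s)\,ds,
\end{equation*}
and projecting on the basis $\{\phi_k\}$, using $\phi_l(t)=e^{-i\lambda_l t}\phi_l$, one obtains the moment representation
\begin{equation*}
\la\phi_k,\,d\Phi_T(0)\cdot v\ra \;=\; -i\,e^{-i\lambda_k T}B_{k,l}\int_0^T v(s)\,e^{i\omega_k s}\,ds,\qquad \omega_k:=\pi^2(k^2-l^2).
\end{equation*}

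Second, I would show that $d\Phi_T(0)$ admits a continuous right inverse onto the tangent space $T_{\phi_l(T)}$ to the $L^2$-sphere of $H^3_{(0)}$ at $\phi_l(T)$. Equivalently, given any $\xi=\sum_k b_k\phi_k\in T_{\phi_l(T)}\cap H^3_{(0)}$, I would solve in real $v\in L^2((0,T),\R)$ the moment problem
\begin{equation*}
\int_0^T v(s)\,e^{i\omega_k s}\,ds \;=\; c_k\;:=\;\frac{i\,e^{i\lambda_k T}b_k}{B_{k,l}},\qquad k\in\N^*.
\end{equation*}
Hypothesis~(\ref{diosbor1}) is precisely the condition $\omega_m+\omega_n\neq 0$ for every $m,n\neq l$, so the bilateral family $\{\pm\omega_k\}_{k\in\N^*}$ consists of pairwise distinct real numbers. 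Moreover, $|\omega_k-\omega_m|=\pi^2|k-m|(k+m)\geq\pi^2$, with quadratic growth at infinity. A standard Ingham-type inequality (for $T$ larger than the threshold dictated by the gap) then yields a topological isomorphism between $L^2((0,T),\R)$ and a closed subspace of $\ell^2$, delivering the required right inverse. The $\ell^2$ compatibility of the prescribed data is exactly the content of Assumptions~I~(1): since $\xi\in H^3_{(0)}$ forces $(k^3 b_k)\in\ell^2$ and $|B_{k,l}|\geq C_l/k^3$, the sequence $(b_k/B_{k,l})_k$ lies in $\ell^2$; the coefficient $c_l$ is real thanks to the tangency condition, which is consistent with $v$ being real-valued.

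Finally, I would conclude via the inverse mapping theorem applied to $\Phi_T$ viewed as a $C^1$ map from $L^2((0,T),\R)$ into the $\|\phi_l\|$-sphere of $H^3_{(0)}$, regarded as a $C^1$-Banach submanifold. The surjectivity of $d\Phi_T(0)$ onto $T_{\phi_l(T)}$ with a continuous right inverse furnishes the existence of $\epsilon>0$ and of a local $C^1$ section, which is exactly the claim: every $\psi\in\widetilde B_{H^3_{(0)}}(\phi_l(T),\epsilon)$ is attained by some $u\in L^2((0,T),\R)$. The main obstacle is the Ingham-type estimate for the bilateral frequency family: it is precisely where (\ref{diosbor1}) is needed to avoid collisions $\omega_m=-\omega_n$, and where $T$ has to be chosen large enough to obtain a uniform lower bound on the moment operator; the power $k^{-3}$ in the lower bound on $|B_{k,l}|$ from Assumptions~I then exactly matches the $H^3_{(0)}$-regularity of the target, so that the full nonlinear inversion takes place in the right functional framework.
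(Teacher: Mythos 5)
Your proposal is correct and follows essentially the same route as the paper: linearize the end-point map at $u=0$, reduce surjectivity of the differential to a trigonometric moment problem with frequencies $\lambda_k-\lambda_l$, solve it by an Ingham-type inequality for the symmetrized frequency family (where the non-resonance condition $(\ref{diosbor1})$ rules out collisions $\omega_m=-\omega_n$ and Assumptions I give the $\ell^2$ compatibility and the reality of the $l$-th coefficient), and conclude with a generalized inverse function theorem. The only difference is presentational: the paper works in the sequence coordinates $h^3(\C)$ with the tangent space $T_{\updelta_l}Q$ and invokes Lemma $\ref{solve112}$ explicitly, whereas you phrase the same argument directly on the sphere of $H^3_{(0)}$.
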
 
\begin{proof}
	Let $S$ be the unit sphere in $\Hi$. Proposition $\ref{lcl}$ can be proved by ensuring the surjectivity, for $T>0$ large enough, of the map $$\G_{T}^{(\cdot)}\phi_l: L^2((0,T),\R)\rightarrow \widetilde B_{H^3_{(0)}}(\phi_l(T), \epsilon)\subset H^3_{(0)}\cap S$$ with suitable $\epsilon>0$. We prove this property and that the preimage of $\widetilde B_{H^3_{(0)}}(\phi_l(T), \epsilon)$ is a neighborhood of $u_0=0$ in $L^2((0,T),\R)$.
	Let $$\G_{t}^u\phi_l=\sum_{k\in\N^*}{\phi_k(t)}\la \phi_k(t),\G_{t}^u\phi_l\ra.$$ Let $\alpha_l(\cdot)=\{\alpha_{k,l}(\cdot)\}_{k\in\N^*}$ be such that $\alpha_{k,l}(\cdot)=\la \phi_k(T), \G_{T}^{(\cdot)}\phi_l\ra$ for $k\in\N^*$ and
	$$\alpha_l:L^2((0,T),\R)\longrightarrow Q:=\{{\bf x}\in h^3(\C)\ |\ \|{\bf x}\|_{\ell^2}=1\}.$$ Let $\delta_l:=\{\delta_{k,l}\}_{k\in\N^*}$. The statement follows from the surjectivity of the map $\alpha_l:L^2((0,T),\R)\longrightarrow Q_\epsilon:=\{{\bf x}\in h^3(\C)\ |\ \|{\bf x}\|_{\ell^2}=1,\  \|{\bf x}-\delta_l\|_{(3)}\leq \epsilon\}$. We use the Generalized Inverse Function Theorem (\cite[Theorem\  1;\ p.\ 240]{Inv}) and we study the surjectivity of the Fréchet derivative of $\alpha_l$
	$$\gamma_l:=d_u\alpha_l(0):L^2((0,T),\R)\longrightarrow T_{\updelta_l}Q=\{\{x_k\}_{k\in\N^*}\in h^3(\C)\ |\ ix_l\in\R\},$$
	$$\gamma_l(v):=\{\gamma_{k,l}(v)\}_{k\in\N^*},\ \ \ \ \gamma_{k,l}(v):=-i\int_{0}^Tv(s)e^{i(\lambda_k-\lambda_l)s}ds B_{k,l},\ \ \  \forall k\in\N^*.$$ To this end, we show there exists $T>0$ so that, for every $\{x_k\}_{k\in\N^*}\in T_{\updelta_l}Q$,
	\begin{equation}\label{mome}\begin{split}
	\exists u\in L^2((0,T),\R)\ \ \ :\ \  \ \frac{x_{k}}{B_{k,l}}=-i\int_{0}^Tu(s)e^{i(\lambda_k-\lambda_l)s}ds,\ \ \ \ \ \ \forall k\in\N^*.\end{split}\end{equation}
	The solvability of the moment problem $(\ref{mome})$ is equivalent to the surjectivity of $\gamma_l$. As $B$ is symmetric, there holds $B_{l,l}\in\R$ and $i\big(x_{l}/B_{l,l}\big)\in\R$. Moreover, $\{x_{k}/B_{k,l}\}_{{k\in\N^*}}\in\ell^2(\C)$ since $\{x_k\}_{k\in\N^*}\in h^3(\C)$ and thanks to Assumptions I. The solvability of $(\ref{mome})$ follows from Lemma $\ref{solve112}$ for $T$ large enough, since 
	$$\{ix_{k}/B_{k,l}\}_{{k\in\N^*}}\in \{\{a_k\}_{k\in\N^*}\in\ell^2(\C):\ a_{l}\in\R\}.$$ For $X$ from Lemma $\ref{solve112}$, the map $\gamma_l:X\longrightarrow T_{\updelta_l}Q$ is an homeomorphism and $\gamma_l:L^2((0,T),\R)\rightarrow T_{\updelta_l}Q$ is surjective in $T_{\updelta_l}Q$ for $T$ large enough. The proof is achieved as the map $\alpha_l$ is surjective in $Q_\epsilon$ for $\epsilon>0$ small enough.	\qedhere
\end{proof}

\subsection{Local exact controllability  in an explicit neighborhood}\label{stimiamo}

Let $C_l$ and $\widetilde B_{H^3_{(0)}}(\cdot,\cdot)$ be respectively defined in Assumptions I and $(\ref{dindo})$. 
The following proposition ensures the local exact controllablity in an explicit neighborood of $H^3_{(0)}$ for a specific time. The result leads to Theorem $\ref{mainteo}$.

\begin{prop}\label{lclneigh}
	Let $B$ satisfy Assumptions I and $l\in\N^*$ be such that
	\begin{equation}\label{diosbor112}m^2-l^2\neq l^2-n^2,\ \ \ \ \  \  \ \ \ \ \ \ \ \forall m,n\in\N^*,\ m,n\neq l.\end{equation}
	For every $\psi\in \widetilde B_{H^3_{(0)}}\big(\phi_l\big(\frac{4}{\pi}\big),  \frac{3C_l^2}{16l^3 \iii B\iii_3^2}\big),$ there exists $u\in L^2((0,\frac{4}{\pi}),\R)$ so that 
	$$\psi= \G^u_{\frac{4}{\pi}}\phi_l.$$
\end{prop}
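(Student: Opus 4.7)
The plan is to revisit the proof of Proposition \ref{lcl} and turn every implicit constant into an explicit one, with the specific time $T=\frac{4}{\pi}$ chosen precisely so that the moment problem (\ref{mome}) admits sharp quantitative estimates. Note that with $T=\frac{4}{\pi}$ the frequencies satisfy $(\lambda_k-\lambda_l)T=4\pi(k^2-l^2)\in 2\pi\Z$, so $\{e^{i(\lambda_k-\lambda_l)t}\}_{k\in\N^*}$ is an orthogonal family in $L^2(0,T)$. Using Lemma \ref{solve112} at this particular $T$, the Fr\'echet derivative $\gamma_l$ of $\alpha_l$ at $u=0$ admits a bounded right inverse $L_l:h^3(\C)\cap T_{\updelta_l}Q\to L^2((0,T),\R)$ whose norm is controlled in closed form. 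Indeed, Assumptions I give $|B_{k,l}|\geq C_l/k^3$, which together with Parseval on $[0,T]$ produces an explicit bound $\iii L_l\iii\leq M_l$ depending only on $C_l$, $l$ and universal numerical constants.

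Next I would estimate the nonlinear remainder $R_l(u):=\alpha_l(u)-\alpha_l(0)-\gamma_l(u)$ in the $h^3$--norm. Iterating the Duhamel formula (\ref{form}) once, the term $\G_T^u\phi_l-e^{-iAT}\phi_l+i\int_0^T e^{-iA(T-s)}u(s)B e^{-iAs}\phi_l\,ds$ is equal to a double integral of the form $\int_0^T\!\int_0^s e^{-iA(T-s)}u(s)B e^{-iA(s-\tau)}u(\tau)B\G_\tau^u\phi_l\,d\tau\,ds$. Pairing with $\phi_k(T)$ and using Proposition \ref{laura} together with Remark \ref{lauraa} (so that $B\in L(H^3_{(0)},H^3\cap H^1_0)$) and Cauchy--Schwarz, one obtains an estimate of the form $\|R_l(u)\|_{(3)}\leq K\iii B\iii_3^{2}\|u\|_{L^2}^{2}$ on a ball $\|u\|_{L^2}\leq\rho_0$, where $K$ is an explicit constant. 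At this stage a quantitative inverse function theorem applies: if $\iii L_l\iii\leq M_l$ and the remainder is quadratic with constant $K\iii B\iii_3^2$, then $\alpha_l$ maps a ball of radius $\rho$ in $L^2((0,T),\R)$ onto a neighborhood of $\updelta_l$ of radius $\sim \rho/(2M_l)$ in $Q$, provided $2K\iii B\iii_3^2 M_l\rho\leq 1$. Optimizing $\rho$ one verifies that the admissible neighborhood in $H^3_{(0)}$ has radius at least $\frac{3C_l^{2}}{16 l^{3}\iii B\iii_3^{2}}$.

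The main obstacle is the bookkeeping rather than the structure: the radius is prescribed, not derived, so the constants coming from the lower bound $|B_{k,l}|\geq C_l/k^3$, the Parseval identity at $T=\frac{4}{\pi}$, and the second-order Duhamel estimate must combine exactly to reproduce the factor $\frac{3}{16}$ and the powers $C_l^{2}$, $l^{-3}$, $\iii B\iii_3^{-2}$. To keep the argument readable I would isolate the quantitative moment estimate and the quantitative IFT as auxiliary lemmas (presumably the intermediate results deferred to Appendix \ref{pallose}) so that the body of the proof reduces to choosing $\rho$ and checking that $\alpha_l^{-1}\bigl(\widetilde B_{H^3_{(0)}}(\phi_l(T),\tfrac{3C_l^{2}}{16 l^3\iii B\iii_3^{2}})\bigr)$ sits inside the admissible $L^2$--ball around $u=0$.
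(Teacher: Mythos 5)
Your overall architecture is the paper's: write $\G_T^u\phi_l=e^{-i\lambda_l T}\phi_l+F_l(u)+H_l(u)$ via one iteration of Duhamel, control a right inverse of the linearization $F_l=\gamma_l$ through the moment problem at $T=\frac{4}{\pi}$, bound the remainder, and invoke a quantitative inverse function theorem (the paper uses \cite[Lemma 2.3, p. 42]{nabil}) to get surjectivity onto an explicit ball of radius $(M-M_1)r$. Your observation that at $T=\frac{4}{\pi}$ every frequency $\pi^2(m^2-n^2)$ completes a whole number of periods, so the relevant exponentials are orthogonal and Parseval controls the right inverse directly, is correct and is a genuinely different (and cleaner) route than the paper's, which instead runs Ingham's theorem with explicitly tracked constants $C_1^2=\frac{3\pi}{16}$, $C_2^2=\frac{8}{\pi}$. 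Note, though, that your constants would then differ from the paper's, so you should not expect to \emph{reproduce} the factor $\frac{3}{16}$; you would derive a (possibly larger) radius of the same shape $\mathrm{const}\cdot C_l^2\, l^{-3}\iii B\iii_3^{-2}$.

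The one genuine gap is in the remainder step. A pointwise quadratic bound $\|R_l(u)\|_{(3)}\leq K\iii B\iii_3^2\|u\|_{L^2}^2$ does not feed any inverse function theorem: in one dimension $f(x)=x+x^2\sin(x^{-2})$ satisfies such a bound and is invertible in no neighbourhood of the origin. What is needed, and what the paper proves as $(\ref{11dd})$, is a Lipschitz estimate $\|H_l(u)-H_l(v)\|_{(3)}\leq M_1\|u-v\|_{2}$ on the ball, with $M_1\leq M/2$. Extracting it from the double Duhamel integral requires two ingredients your plan omits: the a priori bound $\|\G^u_t\phi_l\|_{L^\infty_t H^3_x}\leq\frac{\mu l^3}{\mu-1}$ uniformly over the ball (this is where the factor $l^3$, hence the $l^{-3}$ in the radius, actually enters), and the stability estimate $\|\G^u_t\phi_l-\G^v_t\phi_l\|_{L^\infty_t H^3_x}\lesssim\|u-v\|_{2}$ obtained by absorbing the propagator difference (the paper's $(\ref{morghy})$). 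Both follow from Proposition \ref{laura} and Remark \ref{lauraa} with the tools you already cite, so the repair is routine, but as written the plan skips exactly the step on which the explicit constant hinges. Finally, the normalization $\iii B\iii_3=1$ followed by the rescaling $u\mapsto u\iii B\iii_3$, $C_l\mapsto C_l\iii B\iii_3^{-1}$ is how the paper produces the $\iii B\iii_3^{-2}$ in the radius; this should be made explicit rather than absorbed into bookkeeping.
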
 
\begin{proof}
	Let us define the following notations
	$$\iii \cdot\iii_{L(L^2((0,T),\R),H^3_{(0)})}=\iii \cdot \iii_{(L_t^2,H_x^3)},\ \ \ \iii \cdot\iii_{L(H^3_{(0)},L^2((0,T),\R))}=\iii \cdot \iii_{(H_x^3,L_t^2)},$$
	$$\|\cdot\|_{L^{\infty}((0,T),H^3_{(0)})}=\|\cdot\|_{L_t^\infty H_x^3},\ \ \  \|\cdot\|_{L^2((0,T),\R)}=\|\cdot\|_{2}.$$
	Let $T>\frac{2\pi}{\Gi}$ for $\Gi=\pi^2$ and $X$ be defined in the proof of Lemma $\ref{solve112}$. In the proof of Proposition $\ref{lcl}$, 
	we ensure that $\gamma_l:X\longrightarrow T_{\updelta_l}Q$ is an homeomorphism and that $\alpha_l:L^2((0,T),\R)\rightarrow H^3_{(0)}$ is locally surjective. Let $$A_l(\cdot)=\sum_{j\in\N^*}\phi_j(T)\alpha_{j,l}(\cdot)=\G^{(\cdot)}_T\phi_l,\ \ \ \ \ \  F_l(\cdot):=\sum_{j\in\N^*}\phi_j(T)\gamma_{j,l}(\cdot),$$ $$A_l(\cdot):V\subseteq L^2((0,T),\R)\rightarrow \{\psi\in H^3_{(0)}\ :\ \|\psi\|_{\Hi}=1\},$$ $$F_l(\cdot): X\rightarrow \{\psi\in H^3_{(0)}:\ i\la\phi_l(T),\psi \ra\in \R\}.$$
	By definition, the map $F_l$ is an homeomorphism and $A_l$ is locally surjective.
	We use $\cite[Lemma\  2.3;\ p.\ 42]{nabil}$ and we estimate a neighborhood where $A_l$ is surjective as $X$ and $\{\psi\in H^3_{(0)}:\ i\la\phi_l(T),\psi \ra\in \R\}$ are Banach spaces. In particular, we compute a constant $M>0$ such that
	\begin{equation}\label{11}\|F_l(u)-F_l(v)\|_{(3)}\geq M\|u-v\|_{2},\ \ \ \ \ \ \ \forall u,v\in X.\end{equation}
	
	\noindent
	Fixed $T>0$ large enough, we provide $U\subset X$ and $M_1<M$ such that 
	\begin{equation}\label{11dd}\|(A_l-F_l)(u)-(A_l-F_l)(v)\|_{(3)}\leq M_1\|u-v\|_{2},\ \ \ \ \ \ \ \forall u,v\in U.\end{equation}

	\noindent
	When $(\ref{11})$ and $(\ref{11dd})$ are satisfied, $\cite[Lemma\  2.3;\ p.\ 42]{nabil}$ ensures that $A_l:U\longrightarrow A_l(U)$ is an homeomorphism. In addition, the proof of the cited lemma implies that, if $U\supset\{ u\in X:\|u\|_{2}\leq r\}$ with $r>0$, then
	$$A_l(U)\supset\{\psi\in H^3_{(0)}:\|\psi-\phi_l(T)\|_{(3)}\leq r(M-M_1) \}.$$
	
	\medskip
	\needspace{3\baselineskip}
	
	\noindent
	{\bf 1)} We compute $M>0$ such that $(\ref{11})$ is verified. As $F_l$ is an homeomorphism, for every $\psi\in H^3_{(0)}$, there exist $T>0$ and $u\in X$ such that, for every $j\in\N^*$
	\begin{equation}\label{dioporco}\la\phi_j(T),\psi\ra=\la\phi_j(T),F_l(u)\ra=\gamma_{j,l}(u),\ \ \ \ \ F_l^{-1}(\psi)=u. \end{equation} 
	Now, $\frac{\la\phi_k(T),\psi\ra}{B_{k,l}}=\frac{\gamma_{k,l}(u)}{B_{k,l}}= -i\int_{0}^Tu(s)e^{i(\lambda_k-\lambda_l)s}ds$ for every $k\in\N^*$, thanks to the validity of $(\ref{mome})$.
	Thus, from the proof of Lemma $\ref{solve112}$, we have $$u(t)= \frac{\la\phi_l(T),\psi\ra}{B_{l,l}}v_{l}(t)+2\sum_{k\in\N^*\setminus{l}} \Re\Big(\frac{\la\phi_k(T),\psi\ra}{B_{k,l}} {v_k}(t)\Big)$$ for $\{v_k\}_{k\in\Z}$ the unique biorthogonal family to $\{e^{i\lambda_k (\cdot)}\}_{k\in\Z}$. In addition, from $(\ref{weirdos})$, there exists $\widetilde C(T)>0$ such that $\|u\|_{2}^2\leq \widetilde C(T)^2\sum_{j=1}^{\infty}\big|\frac{\gamma_{j,l}(u)}{B_{j,l}}\big|^2$ and
	\begin{equation*}
	\begin{split}
	\| F_l^{-1}(\psi)\|_{2}^2&=\|u\|_{2}^2 \leq \frac{\widetilde C(T)^2}{C_l^2}\sum_{j=1}^{\infty}|j^3\gamma_{j,l}(u)|^2\leq \frac{\widetilde C(T)^2}{C_l^2}\|\psi\|_{(3)}^2.\\
	\end{split}
	\end{equation*}
	In conclusion, we set $M={C_l}/{\widetilde C(T)}$ since, for every $\psi,\ffi\in H^3_{(0)}$, there exist $v,w\in X$ such that $\psi=F_l(v),$ $\ffi=F_l(w)$ and
	$$\|v-w\|_{2}\leq \|F_l^{-1}(\psi-\ffi)\|_{2}\leq \frac{\widetilde C(T)}{C_l}\|\psi-\ffi\|_{(3)}.$$

	\medskip

	\needspace{3\baselineskip}
	
	\noindent
	{\bf 2)} We suppose $\iii B\iii_{3}=1$. For $u\in X$, from the Duhamel's formula,
	\begin{equation}\label{der}\begin{split}
	&\G_T^u\phi_l=e^{-i\lambda_l T}\phi_l-i\int_0^T e^{-iA(T-s)}u(s) B \G_s^u\phi_l\\
	&=e^{-i\lambda_l T}\phi_l-i\int_0^T e^{-iA(T-s)}u(s) B e^{-i\lambda_l s}\phi_l ds\\
	&-\int_0^T e^{-iA(T-s)}u(s) B\Big( \int_0^{s} e^{-iA(s-{\tau})}u({\tau}) B\G_{\tau}^u\phi_ld{\tau}\Big)ds
	\end{split}\end{equation}
	Let $H_l(u):=-\int_0^T e^{-iA(T-s)}u(s) B \int_0^{s} e^{-iA(s-{\tau})}u({\tau}) B\G_{\tau}^u\phi_ld{\tau}ds$. From $(\ref{der})$,
	\begin{equation*}\begin{split}
	&A_l(u)=\G_T^u\phi_l=e^{-i\lambda_l T}\phi_l+F_l(u)+H_l(u).\\
	\end{split}\end{equation*}
	We recall that we aim to exhibit a ball $U\subset X$ with center $u=0$ such that, for every $u\in U$, the map $A_l:u\in U\mapsto \G_T^u\phi_l\in A_l(U)$ is an homeomorphism by using $\cite[Lemma\  2.3;\ p.\ 42]{nabil}$. 
	To this purpose, we construct $U$ such that there exists $M_1>0$ satisfying $(\ref{11dd})$ and $M_1\leq M/{2}.$ We notice that
	$$\|(A_l-F_l)(u)-(A_l-F_l)(v)\|_{(3)}= \|H_l(u)-H_l(v)\|_{(3)},\ \ \ \ \forall u,v\in L^2((0,T),\R),$$ 
	\begin{equation}\begin{split}\label{distanza}
	H_l(u)-H_l(v)&=-\int_0^T e^{-iA(T-s)}(u(s)-v(s)) B\Big( \int_0^{s} e^{-iA(s-{\tau})}u({\tau}) B\G_{\tau}^u\phi_ld{\tau}\Big)ds\\
	&-\int_0^T e^{-iA(T-s)}v(s) B\Big( \int_0^{s} e^{-iA(s-{\tau})}(u({\tau})-v({\tau})) B\G_{\tau}^u\phi_ld{\tau}\Big)ds\\
	&-\int_0^T e^{-iA(T-s)}v(s) B\Big( \int_0^{s} e^{-iA(s-{\tau})}v({\tau}) B(\G_{\tau}^u\phi_l-\G_{\tau}^v\phi_l)d{\tau}\Big)ds.\\
	\end{split}\end{equation}
	Thanks to Proposition $\ref{laura}$ and Remark $\ref{lauraa}$, there exists a constant $C(T)>0$ such that, for every $\psi\in L^\infty((0,T),H^3_{(0)})$ and $u\in L^2((0,T),\R)$,
	\begin{equation}\label{russo}\left\|\int_0^T e^{-iA(T-s)}u(s) B\psi ds
	\right\|_{(3)}\leq C(T)\|u\|_{2}\iii B\iii_3\|\psi \|_{L_t^\infty H_x^3}.\end{equation}
	As we assumed $\iii B\iii_3=1$, we have
		\begin{equation*}\begin{split}
	&\|\G_t^v\phi_l-\G_t^u\phi_l\|_{L_t^\infty H_x^3}\leq\Big\|\int_0^te^{-iA(t-s)}B(v\G_t^v\phi_l-u\G_t^u\phi_l)\Big\|_{L_t^\infty H_x^3}\\
	&\leq C(T)\iii B\iii_{3} \|v\G_t^v\phi_l-u\G_t^u\phi_l\|_{L_t^\infty H_x^3}\\
	&\leq 
	C(T)\|v-u\|_{2}\|\G_t^u\phi_l\|_{L_t^\infty H_x^3}
	+C(T)\|v\|_{2}\|\G_t^v\phi_l-\G_t^u\phi_l\|_{L_t^\infty H_x^3}.
	\end{split}\end{equation*}
	Let $\mu >1$. If $U=\{u\in X:\ \|u\|_{2}\leq{(\mu C(T))}^{-1}\},$ then 
	\begin{equation}\label{morghy}\begin{split}
	\|\G_t^v\phi_l-\G_t^u\phi_l\|_{L_t^\infty H_x^3}&
	\leq{\frac{\mu C(T) }{\mu -1}\|v-u\|_{2}\|\G_t^u\phi_l\|_{L_t^\infty H_x^3}}\\
	\end{split}\end{equation}
	for every $u,v\in U$. From $(\ref{distanza})$, when $\|u\|_{2},\|v\|_{2}\leq{(\mu C(T))}^{-1}$,
	\begin{equation*}\begin{split}
	&\|H_l(u)-H_l(v)\|_{{(3)}}
	\leq C(T)^2\big(\|v-u\|_{2}(\|u\|_{2}+\|v\|_{2})\|\G_t^u\phi_l\|_{L_t^\infty H_x^3}\\
	&+C(T)^2\|v\|^2_{2}\|\G_t^v\phi_l-\G_t^u\phi_l\|_{L_t^\infty H_x^3}\\
	&\leq  2\mu^{-1}C(T)\|v-u\|_{2}\|\G_t^u\phi_l\|_{L_t^\infty H_x^3}+\mu^{-2}\|\G_t^v\phi_l-\G_t^u\phi_l\|_{L_t^\infty H_x^3}
	\end{split}\end{equation*}
	and, thanks to $(\ref{morghy})$, we have
	\begin{equation*}\begin{split}
	&\|H_l(u)-H_l(v)\|_{{(3)}}\leq \frac{(2\mu -1)}{(\mu -1)\mu }C(T)\|v-u\|_{2}\|\G_t^u\phi_l\|_{L_t^\infty H_x^3}.\\
	\end{split}\end{equation*}
	Thanks to the relation $(\ref{russo})$ and to the Duhamel's formula,
	$$\|\G_T^u\phi_l\|_{L_t^\infty H^3_x}\leq \|\phi_l\|_{(3)}+C(T)\|u\|_{2}\iii B\iii_{3}\|\G_T^u\phi_l\|_{L_t^\infty H^3_x}.$$
	We obtain $\|\G_T^u\phi_l\|_{L_t^\infty H^3_x}\leq \frac{\|\phi_l\|_{(3)}}{1-C(T)\|u\|_{2}\iii B\iii_{3}}\leq \frac{\mu l^3}{\mu -1}$ and, for every $u,v\in U$,
	\begin{equation*}\begin{split}
	&\Longrightarrow\ \ \ \ \ \|H_l(u)-H_l(v)\|_{{(3)}}\leq \frac{2\mu -1}{(\mu -1)^2}l^3C(T)\|v-u\|_{2}.\\
	\end{split}\end{equation*}
	To apply $\cite[Lemma\  2.3;\ p.\ 42]{nabil}$, we set $M_1=\frac{2\mu -1}{(\mu -1)^2}l^3C(T)$ and we estimate $\mu $ such that $M_1\leq M/{2}.$ 
	The last inequality is true when
	\begin{equation}\label{cazz1}\mu \geq a_l+\sqrt{a_l(a_l+1)}+1,\ \ \ \ \ \ \ \ a_l:={2C(T)\widetilde C(T)l^3}{C_l^{-1}}.\end{equation} 
		Let $T=\frac{4}{\pi}$. We notice that $a_l\leq \frac{6}{5}\widetilde a_l$ for $\widetilde a_l:={l^3}/{C_l}$ as
		$$C\Big(\frac{4}{\pi}\Big)\widetilde C\Big(\frac{4}{\pi}\Big)\leq \frac{3}{5}.$$
		We study the constants $C_1,C_2$ appearing in $(\ref{cris})$ that is valid thanks to the Ingham's Theorem ($\cite[Theorem \ 4.3]{Ing}$).  Let $|I'|:=\frac{\Gi}{\pi}T=4$, $\beta=\frac{\pi^2}{4}$, $G(0)=\frac{\pi}{2},$ $I_0=[-1,+1],$ $m=\big({|I'|}{|I_0|^{-1}}\big)=2,$ $\alpha=4R^2$, $\widehat G(0)=\frac{(R^2-1)\pi}{2}$ and $R=\frac{|I'|}{2}=2.$
		By substituting these parameters in the proof of Ingham's Theorem $\cite[pp.\ 62-65]{Ing}$), we obtain $$C_2^2=\frac{2m\pi G(0)\pi}{\beta \Gi}=\frac{8}{\pi},\ \ \ \ \ \ C_1^2=\frac{2\pi \widehat G(0)\pi}{\alpha \Gi}=\frac{3\pi}{16}.$$
		The proof of Proposition $\ref{laura}$ and $(\ref{weirdos})$, imply
		\begin{equation}\label{orcod}C(T)=C\big({4}/{\pi}\big)=3\pi^{-3}\max\big\{\sqrt{2}C_2,\sqrt{{4}/{\pi}}\big\}=3\pi^{-3}\sqrt{2}C_2.\end{equation}
		In addition, we have $\widetilde C\big(\frac{4}{\pi}\big)=2C_1^{-1}$, $C\big(\frac{4}{\pi}\big)\widetilde C\big(\frac{4}{\pi}\big)\leq \frac{3}{5}$ and $a_l\leq \frac{6}{5}\widetilde a_l$.	
	Now, $$C_l\leq |\la\phi_1,B\phi_l\ra| \leq\iii B\iii=1,\ \ \ \ \ \Longrightarrow\ \ \ \ \  \ \widetilde a_l>1.$$ We need to define $\mu$ such that $(\ref{cazz1})$ is verified and we notice that
	\begin{equation*}\begin{split} &a_l+\sqrt{a_l(a_l+1)}+1\leq \Big(\frac{6}{5}\widetilde a_l+\sqrt{\frac{6}{5}\widetilde a_l\Big(\frac{6}{5}\widetilde a_l+1\Big)}+1\Big)\leq \frac{22}{5}\widetilde a_l=\frac{22}{5}\frac{l^3}{C_l}.\end{split}\end{equation*}
	If we set $\mu=\frac{22}{5}\frac{l^3}{C_l}$, then $\mu\geq a_l+\sqrt{a_l(a_l+1)}+1$ as required in $(\ref{cazz1})$ and $$U=\big\{u\in X:\ \|u\|_{2}\leq{(\mu C({4}/{\pi}))}^{-1}\big\}.$$ Since $M_1\leq M/{2}$, we have the validity of $\cite[Lemma\  2.3;\ p.\ 42]{nabil}$, which implies that $A_l:U\subseteq L^2((0,4/\pi),\R)\rightarrow A(U)\subseteq H^3_{(0)}$ is an homeomorphism.
	
	\medskip
	
	\needspace{3\baselineskip}
	
	\noindent
	{\bf 3)} We show a neighborhood of $\phi_l$ in $H^3_{(0)}$ contained in $A_l(U).$ Let
	$$B_{X}(x,r):=\{\widetilde x\in X\ \big| \  \|\widetilde x-x\|_{L^2((0,\frac{4}{\pi}),\R)}\leq r\}.$$ We notice that $\mu C\Big(\frac{4}{\pi}\Big)<\frac{l^3}{C_l}$ and we set $\widetilde U=B_{X}\left(0,\frac{C_l}{l^3}\right)\subset U.$ From the proof of $\cite[Lemma\  2.3;\ p.\ 42]{nabil}$, we know that $A_l(U)$ contains a ball of center $A_l(0)=\phi_l\big(\frac{4}{\pi}\big)$ and radius $(M-M_1)\frac{C_l}{l^3}$.
	As $M_1\leq M/{2}$, we have 
	$$M-M_1\geq\frac{1}{2}M\geq\frac{{C_l}}{2\widetilde C\Big(\frac{4}{\pi}\Big)}=\frac{{3}{C_l}}{16},$$
	\begin{equation*}\begin{split}
	&A_l\left(B_{X}\left(0,\frac{C_l}{l^3}\right)\right)\supseteq \widetilde B_{H^3_{(0)}}\left(A_l(0),(M-M_1)\frac{C_l}{l^3}\right)\supseteq \widetilde B_{H^3_{(0)}}\left(\phi_l\Big(\frac{4}{\pi}\Big),\frac{3C_l^2}{16 l^3}\right).
	\end{split}
	\end{equation*}
	In the second part of the proof, we suppose $\iii B\iii_{3}=1$, but we can generalize the result for $\iii B\iii_3\neq 1$ thanks to the identity $A+uB=A+u\iii B\iii_{3} \frac{B}{\iii B\iii_{3} }.$ We consider the operator $\frac{B}{\iii B\iii_{3}}$ and the control $u\iii B\iii_{3}$, while we substitute $C_l$ with $C_l \iii B\iii_3^{-1}$ (from Assumptions I). In addition, if $\iii B\iii_{3}u\in B_{X}\left(0,\frac{C_l}{l^3\iii B\iii_3}\right)$, then $u\in B_{X}\left(0,\frac{C_l}{l^3\iii B\iii_3^2}\right).$	In conclusion,
	$$\forall\psi\in \widetilde B_{H^3_{(0)}}\left(\phi_l\Big(\frac{4}{\pi}\Big),\frac{ 3C_l^2}{16 l^3\iii B\iii_3^2}\right),\ \exists\ u\in B_{X}\left(0,\frac{C_l}{l^3\iii B\iii_3^2}\right): \ A_l(u)=\psi.\qedhere$$
\end{proof}

\section{Proof of Theorem\ $\ref{mainteo}$}\label{proofmainteo}
Let $T^*$, $u_n$, $b$, $E(j,k)$ and $C '$ be defined in $(\ref{definizioni})$.
	The proof follows from Proposition $\ref{lclneigh}$ and Proposition $\ref{approssimaH3T}$. From Proposition $\ref{approssimaH3T}$, we have  
	\begin{equation*}\begin{split}
	R''_n:=\|\G_{nT^*}^{u_n}\phi_j-e^{i\theta}\phi_k\|_{(3)}^8\leq&\frac{2^{20} 3^{2}\pi^{24}e^{\frac{6\iii B\iii_{(2)}}{B_{j,k}}}|k^2-j^2|^5
		\max\{j,k\}^{24}}{|B_{j,k}|^{7}n}\cdot\\
	&\frac{(1+C') \iii B\iii_{(2)}^6\iii B\iii \max\{\iii B\iii,\iii B\iii_3\}}{|B_{j,k}|^{7}n}.\\
	\end{split}\end{equation*}
	We know $\lim_{n\rightarrow\infty}R''_n=0$. Let us provide an explicit $n^*$ so that
	\begin{equation}\label{bongobongo}\G_{n^*T^*}^{u_{n^*}}\phi_j\in \widetilde B_{H^3_{(0)}}\Bigg(e^{i\theta}\phi_k,\frac{3C_k^2}{2^4k^3 \iii B\iii_3^2}\Bigg),\ \ \ \  \ R''_{n^*}\leq\frac{3^8C_k^{16}}{2^{32}k^{24} \iii B\iii_3^{16} }.\end{equation}
	For $0\leq s < 3$ and $j,k\in\N^*$, $\iii B\iii_{(s)}\geq C_k$ and $\iii B\iii_{(s)}\geq |B_{j,k}|$. If
	\begin{equation*}\begin{split}
	n^*\geq\frac{2^{51}\pi^{19}e^{\frac{6\iii B\iii_{(2)}}{B_{j,k}}}
		b(1+C')|k^2-j^2|^5k^{24}\max\{j,k\}^{24}}{C_k^{16}|B_{j,k}|^{7}}\\
	\end{split}\end{equation*}
	then $(\ref{bongobongo})$ is valid with $b$ from $(\ref{definizioni})$. Thanks to Proposition $\ref{lclneigh}$ and to the time reversibility of the $(\ref{mainx1})$ (see $\cite[Section\ 1.3]{mio1}$), we obtain
	\begin{equation}\label{oo} 
	\exists u\in L^2\Big(\Big(0,\frac{4}{\pi}\Big),\R\Big)\ \ \ \  :\ \ \  \ \G_{\frac{4}{\pi}}^{u}\G_{n^*T^*}^{u_{n^*}}\phi_j=e^{i\theta}\phi_k.\end{equation}

\section{Application of the main result}\label{esempio}
In the current section, we briefly propose a possible application of Theorem $\ref{mainteo}.$
Let us consider an electron trapped in a one-dimensional guide of length $\sim 10^{-3}\ m$ and represented by the quantum state $\uppsi$. We suppose that the electron is subjected to an external time-depending electromagnetic field $\mathcal{V}(\tau)$ with $\tau\in [0, \text{T}]$ and $\text{T}$ a positive time. Let $m_e\sim 10^{-30}\  Kg$ be the mass of the electron and $\hbar\sim 10^{-34}\ \frac{m^2\  Kg}{s}$ with $\hbar$ the reduced Planck constant. The dynamics of $\uppsi$ is modeled by the Schr\"odinger equation
\begin{equation}\label{s}\begin{split}
i \hbar\ \frac{d}{d\tau}\uppsi(\tau)=-\frac{\hbar^2}{m_e}\ \frac{d^2}{d\text{x}}\uppsi(\tau)+\mathcal{V}(\tau)\uppsi(\tau),\ \ \ \ \ \ \tau\in (0,\text{T}).\end{split}\end{equation}
We substitute $x:=\text{x}\   10^{3}\  {m}^{-1}$, $t:=\tau \   10^{2}\ {s}^{-1}$ and $\psi(t,x):=\uppsi(\tau,\text{x}).$ Now, $$V(t):=\mathcal{V}(\tau)\ 10^{32}\ \frac{s^2}{m^2\  Kg},\ \ \ \ \ (t,x)\in (0,T)\times(0,1),\ \ \ \ T:=\text{T} \   10^{2}\ {s}^{-1}$$ are dimensionless (without unit of measurement) and $(\ref{s})$ corresponds to
$$i\ \frac{d}{d t}\psi(t)=-\frac{d^2}{d x^2}\psi(t)+ V(t)\psi(t),\ \ \ \ \ t\in(0,T).$$
If the potential $V(t,x)$ is equal to $u(t)x^2$ , then we obtain the $(\ref{mainx1})$
$$i\dd_{ t}\psi( t,x)=A\psi(t,x)+u( t)x^2\psi(t,x).$$

\smallskip
We point out that the last equation can be used to model the dynamics of an electron subjected to two external fields. The first one forces its behaviour to a quantum harmonic oscillator with time dependent intensity. The second field instead traps the electron in a potential well.
\medskip

We exhibit $u$ driving the state of the particle from the first excited state to the ground state. For this reason, we retrace the proof of the first point of Theorem $\ref{mainteo}$ with $B:\psi\mapsto x^2\psi$. Let $\phi_1$ and $\phi_2$ be eigenstates of $A$. We define a control function driving $\phi_2$ in $\phi_1$. We notice that $\la\phi_j, x^2\phi_k\ra={2}\int_0^1x^2\sin(\pi jx)\sin(\pi kx)dx$ and
Assumptions I are satisfied since
\begin{equation*}\begin{split}
|\la\phi_j, x^2\phi_k\ra|&=
\Big|\frac{(-1)^{j-k}}{(j-k)^2\pi^2}-\frac{(-1)^{j+k}}{(j+k)^2\pi^2}\Big|=
\frac{4jk}{(j^2-k^2)^2\pi^2},\ \ \ \ \ j\neq k,\\
|\la\phi_k, x^2\phi_k\ra|&=\Big|\frac{1}{3}-
\frac{1}{2k^2\pi^2}\Big|,\ \ \ \ \ \ \ \ \ \ \ \ \ \ \ \ \ \ k\in\N^*.\\
\end{split}
\end{equation*}

\noindent
For $\psi\in H^3_{(0)}$, we have $x^2\psi\in H^3\cap H^1_0$, $\|x\psi\|\leq\|\psi\|,$ $\|x^2\psi\|\leq\|\psi\|$ and 
$$\|\dd_x\psi\|^2=\||A|^\frac{1}{2}\psi\|^2=\sum_{k\in\N^*}|\pi k \la\phi_k,\psi\ra|^2\leq\sum_{k\in\N^*}|(\pi k)^2\la\phi_k,\psi\ra|^2=\|\dd_x^2\psi\|^2.$$ From the Poincar\'e inequality, $\|\psi\|\leq\frac{1}{\pi}\|\dd_x\psi\|$ and $\|\dd_x^2\psi\|\leq\frac{1}{\pi}\|\dd_x^3\psi\|.$ Thus,
\begin{equation*}\begin{split}
\|\dd_x(x^2\psi)\|&\leq\|2x\psi\|+\|x^2\dd_x\psi\|\leq\frac{2}{\pi^2}\|\dd_x^3\psi\|+\frac{1}{\pi}\|\dd_x^3\psi\|\leq\frac{2+\pi}{\pi^2}\||A|^\frac{3}{2}\psi\|,\\
\|\dd_x^2(x^2\psi)\|&\leq\|2\psi\|+\|4x\dd_x\psi\|+\|x^2\dd_x^2\psi\|\leq\frac{2+5\pi}{\pi^2} \||A|^\frac{3}{2}\psi\|,\\
\|\dd_x^3(x^2\psi)\|&\leq\|6\dd_x\psi\|+\|6x\dd_x^2\psi\|+
\|x^2\dd_x^3\psi\|\leq
\frac{12+\pi}{\pi}\||A|^\frac{3}{2}\psi\|,\\
\end{split}
\end{equation*}
\begin{equation*}\begin{split}
&\iii B\iii_3=\sup_{\underset{\|\psi\|_{(3)}\leq 1}{\psi\in H^3_{(0)}}}\sqrt{\|\dd_x (x^2\psi)\|^2+\|\dd_x^2 (x^2\psi)\|^2+\|\dd_x^3 (x^2\psi)\|^2}\leq\ 5.2\ .\\
\end{split}
\end{equation*}
Similarly, $\iii B\iii_{(2)}\leq 1.64 $ and $\iii B\iii\leq 1$. Moreover, $C'=0$ and
$$|B_{1,1}|=C_1=\frac{2\pi-3}{6\pi^2},\ \ \ \  |B_{1,2}|=C_2=\frac{8}{9\pi^2},\ \ \ \ \ I=\frac{4}{3\pi^2}.$$
If we retrace the proof of Proposition $\ref{lclneigh}$ by substituting the previous constants, then we see that the local exact controllability is verified in
$\widetilde B_{H^3_{(0)}}(\phi_1,\ 2.14\  10^{-5})$. Let $T=\frac{2}{3\pi}$, $u(t)=\cos(3\pi^2 t)$, $T^*=\frac{9\pi^3}{8}$, $K=\frac{9\pi^2}{4}$. By repeating the proof of Theorem $\ref{mainteo}$ and Proposition $\ref{approssimaH3T}$, for $u_n:=\frac{u}{n}$,
\begin{equation*}\begin{split}
&\exists \theta\in\R\ \ \ \ \ :\ \ \ \ \|e^{i\theta} \phi_1-\G_{nT^*}^{u_n} \phi_2\|^8_{(3)}\leq { 10^{80}}{n^{-1}}.\\
\end{split}\end{equation*}
In the neighborhood $\widetilde B_{H^3_{(0)}}\left(\phi_1,\  2.14\  10^{-5}\right),$ the local exact controllability is verified, while the first point of Theorem $\ref{mainteo}$ holds for $n=2.3\  10^{117}$. Let
$$u(t)=(2.3\  10^{117})^{-1}\cos(3\pi^2t), \ \ \ \ \ T=(2.3\  10^{117})\frac{9\pi^3}{8}.$$
There exists $ \theta\in\R$ such thats $\big\|e^{i\theta}\phi_1-\G_{T}^{u}\phi_2\big\|_{(3)}\leq 2.14\  10^{-5}.$
In addition, 
$$\exists \tilde u\in L^2((0,\frac{4}{\pi}),\R)\ \ \ \ :\ \ \ \ \G_{T}^{u}\G_{\frac{4}{\pi}}^{\tilde u}\phi_2=e^{i\theta}\phi_1.$$
In conclusion, the dynamics of $(\ref{s})$ drives the state of the electron from the first excited state to the ground state in a time $\text{T}\sim 10^{116}\ s.$

\section{Conclusion}\label{conclusion}

The results provided in the work represent a contribution for the application of the control theory to the physical experimentation for systems modeled by the bilinear Schr\"odinger equation. 
Given any couple of bounded states, we provided controls and times such that the dynamics of the $(\ref{mainx1})$ drives the first state close as much desired to the second one with respect to the $H^3_{(0)}-$norm.
After, we estimated a neighborhood in $H^3_{(0)}$ of any bounded state where the local exact controllability is satisfied in a given time.
In conclusion, for any couple of bounded state, we have defined a dynamics steering the first one into the second in explicit time.

\medskip
Given two bounded states, every aspect of the dynamics driving the first one to the second is explicit (up to the control function ruling the very last part of the dynamics).
Nevertheless, the estimates introduced in the work are far from being optimal and one might be interested in optimizing them in order to study meaningful physical systems.
The first try is to repeat the steps of the proof of Theorem $\ref{mainteo}$ by considering, from the beginning, specific $B$, $j$ and $k$. However, other possible improvements are the following. 

\begin{itemize}
	\item For instance, Theorem $\ref{mainteo}$ is stated for the control function $u_n(t)=n^{-1}\cos((\lambda_k-\lambda_j)t)$ with $n,j,k\in\N^*$. However, this choice is arbitrary. Indeed, the provided theory is based on $\cite{chambrion2}$ that considers generic $\frac{2\pi}{|\lambda_k-\lambda_j|}-$periodic controls. By retracing the proof of Theorem $\ref{mainteo}$ with a different suitable control, sharper results may be obtained.
	
	\item In Proposition $\ref{lclneigh}$, the controllability may be obtained in a larger neighborhood. Instead of Ingham's Theorem, one may use the \virgolette{Haraux's Theorem} $\cite[Theorem\  4.6]{Ing}$ and change the time $T=\frac{4}{\pi}$.
	
\end{itemize}

\medskip
\noindent
{\bf Acknowledgments.} The author thanks Thomas Chambrion for suggesting him the problem and for the explanations provided on the work $\cite{chambrion2}$.  He is also grateful to the colleagues Nabile Boussa\"id, Lorenzo Tentarelli and Riccardo Adami for the fruitful discussions.
This work has been partially supported by the ISDEEC project by ANR-16-CE40-0013.

\appendix

\section{Appendix: Explicit controls and times for the global approximate controllability}\label{pallose}
For $j,k\in\N^*$, let $T^*$, $u_n$, $b$, $E(j,k)$ and $C '$ be defined in $(\ref{definizioni})$. We denote
$$T=\frac{2\pi}{|\lambda_k-\lambda_j|},\ \ \ \ \ \ \ \ I=\frac{4}{|\lambda_k-\lambda_j|},\ \ \ \ \ \ \ \ K=\frac{2}{|B_{j,k}|},$$
In the following proposition, we provide a global approximate controllability result with explicit controls and times with respect to the $\Hi$-norm.

\begin{prop}\label{approssimaH}
	Let $B$ satisfy Assumptions I. For every $j,k\in\N^*$, $j\neq k,$ and $n\in\N^*$ such that
	\begin{equation}\label{cond1}
	n\geq\frac{3(1+C')|B_{j,k}|^{-1}\iii B\iii^2}{|k^2-j^2|},\end{equation}
	there exist $T_{n}\in (nT^*-T,nT^*+T)$ and $\theta\in\R$ such that
	$$\|\G_{T_n}^{u_n}\phi_j-e^{i\theta}\phi_k\|^2_{\Hi}\leq{\frac{3^2
			|B_{j,k}|^{-1}(1+C')\iii B\iii^2}{n|k^2-j^2|}}.$$
\end{prop}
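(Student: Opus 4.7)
My plan is to make quantitative the resonant averaging strategy of $\cite{chambrion2}$: pass to the interaction picture, isolate an exactly solvable two-level effective dynamics on $\spn\{\phi_j,\phi_k\}$, and control the non-resonant remainder using the gap quantities $C'$ and $\iii B\iii$.

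Set $x_m(t):=\la\phi_m,e^{iAt}\G^{u_n}_t\phi_j\ra$, a unit vector in $\ell^2$ satisfying
\[
\dot x_m(t)=-iu_n(t)\sum_{l\in\N^*}B_{m,l}\,e^{i(\lambda_m-\lambda_l)t}\,x_l(t),\qquad x_m(0)=\delta_{m,j}.
\]
Expanding $u_n(t)=(2n)^{-1}\big(e^{i(\lambda_k-\lambda_j)t}+e^{-i(\lambda_k-\lambda_j)t}\big)$ turns the right-hand side into a sum of oscillatory exponentials of frequencies $\omega^{\pm}_{m,l}:=(m^2-l^2\pm(k^2-j^2))\pi^2$. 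The only zero-frequency pairs with at least one of $m,l$ in $\{j,k\}$ that couple already populated modes are $(m,l)=(j,k)$ and $(k,j)$. Keeping only these produces the averaged two-level system
\[
\dot y_j=-\tfrac{iB_{j,k}}{2n}\,y_k,\qquad \dot y_k=-\tfrac{iB_{k,j}}{2n}\,y_j,\qquad (y_j,y_k)(0)=(1,0),
\]
an explicit Rabi rotation of angular speed $|B_{j,k}|/(2n)$ that sends $(1,0)$ to $(0,e^{i\theta})$, for some $\theta\in\R$ depending only on $\arg B_{j,k}$, at time $nT^*=n\pi/|B_{j,k}|$. This identifies the target phase.

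Next I would bound $\|x-y\|_{\ell^2}^2$ by integrating by parts each non-resonant term $\int_0^{T_n}u_n(s)e^{i\omega^{\pm}_{m,l}s}ds$, producing a factor $|\omega^{\pm}_{m,l}|^{-1}$. For $(l,m)\in\Lambda'$ this denominator equals, up to a constant, $\pi^2|k^2-j^2|\,|\sin(\pi|l^2-m^2|/|k^2-j^2|)|$, hence is controlled by $C'/(\pi^2|k^2-j^2|)$; for pairs with $|l^2-m^2|>\tfrac{3}{2}|k^2-j^2|$, the trivial estimate $|\omega^{\pm}_{m,l}|\ge\tfrac{\pi^2}{2}|k^2-j^2|$ gives the ``$1$'' in the factor $1+C'$. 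Summing the equation for $\dot x_m$ via Bessel's inequality $\sum_l|B_{m,l}|^2\le\iii B\iii^2$ collapses the $\ell^2$ double sum into $\iii B\iii^2$; the factor $|B_{j,k}|^{-1}$ appears because the remainder is accumulated over a Rabi half-period $nT^*=n\pi/|B_{j,k}|$. The hypothesis $(\ref{cond1})$ is exactly the smallness condition that allows the self-consistent feedback of the remainder on itself to be closed by a Gr\"onwall/bootstrap argument.

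Finally, $T_n$ is chosen inside the window $(nT^*-T,nT^*+T)$ of one forcing period: this degree of freedom kills the still-oscillating boundary terms from the integrations by parts by an appropriate choice of phase, so that the resonant 2-level piece still lands at $e^{i\theta}\phi_k$ up to the residual just estimated. The main obstacle will be the third paragraph: the bookkeeping of the feedback of the remainder on itself, together with the diagonal contributions $l=m$ (which, paired with the exact resonance of $u_n$, generate a slow phase drift absorbed into $\theta$), must be tracked carefully enough to yield exactly the constant $3^2(1+C')\iii B\iii^2/(n|k^2-j^2|\,|B_{j,k}|)$ and not a weaker bound.
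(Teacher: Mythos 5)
Your plan sets out to re-derive from scratch the averaging analysis that the paper does not redo. The paper's proof of Proposition \ref{approssimaH} consists of quoting \cite[Proposition~6]{chambrion2}, which directly supplies, for some $T_n\in(nT^*-T,nT^*+T)$, the inequality
\begin{equation*}
\frac{1-|\la\phi_k,\G_{T_{n}}^{u_n}\phi_j\ra|}{1+2K\iii B\iii}\leq\frac{(1+C')\iii(\phi_j\la\phi_j,\cdot\ra+\phi_k\la\phi_k,\cdot\ra) B\iii I}{n},
\end{equation*}
followed by elementary algebra: unitarity gives $\sum_{l\neq k}|\la\phi_l,\G_{T_n}^{u_n}\phi_j\ra|^2=1-|\la\phi_k,\G_{T_n}^{u_n}\phi_j\ra|^2\leq 2R_n$ with $R_n:=(1+2K\iii B\iii)(1+C')\iii B\iii I/n$, the phase $\theta$ is chosen to align $\la\phi_k,\G_{T_n}^{u_n}\phi_j\ra$ with $e^{i\theta}$, and the hypothesis $(\ref{cond1})$ serves only to guarantee $R_n\leq1$ so that $2R_n+R_n^2\leq 3R_n$. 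In other words, the entire analytic content you propose to develop is outsourced to \cite{chambrion2}, and the condition on $n$ is not a Gr\"onwall smallness threshold but the trivial requirement $R_n^2\leq R_n$; your proposal never addresses this short conversion from the overlap estimate to the $\Hi$-distance, which is the only thing the paper actually proves here.

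Taken on its own terms, your re-derivation also has a concrete gap. The assertion that the only zero-frequency couplings touching populated modes are $(m,l)=(j,k)$ and $(k,j)$ is false: every pair with $|m^2-l^2|=|k^2-j^2|$ is exactly resonant, such pairs are explicitly excluded from $\Lambda'$ (so $C'$ gives no control over them), and they do occur with $m\neq k$ --- for instance $j=5$, $k=1$, $m=7$ gives $m^2=2j^2-k^2$, so the mode $\phi_7$ is driven resonantly out of $\phi_5$ at the same order $1/n$ as the intended transition to $\phi_1$. Your integration-by-parts step cannot touch these terms, and they break the clean two-level Rabi reduction; handling such degenerate transitions is precisely the delicate point of \cite{chambrion2}, where the estimate is phrased through the projected operator $\phi_j\la\phi_j,\cdot\ra B+\phi_k\la\phi_k,\cdot\ra B$ and an efficiency argument rather than a naive averaging onto $\spn\{\phi_j,\phi_k\}$. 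Combined with the fact that the decisive quantitative step --- recovering the factor $(1+2K\iii B\iii)I$ that produces the constant $3^2(1+C')\iii B\iii^2/(n|k^2-j^2||B_{j,k}|)$ --- is exactly the part you defer, the proposal as written does not establish the stated bound.
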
 

\begin{proof}
	Thanks to $\cite[Proposition \ 6]{chambrion2}$, for any $n\in\N^*$, there exists $T_{n}\in (nT^*-T,nT^*+T)$ such that
	\begin{equation}\label{3linee}\frac{1-|\la\phi_k,\G_{T_{n}}^{u_n}\phi_j\ra|}{1+2K\iii B\iii}\leq\frac{(1+C')\iii(\phi_j\la\phi_j,\cdot\ra+\phi_k\la\phi_k,\cdot\ra) B\iii I}{n}.\end{equation}
	We underline that the definition of $T^*$ given in $\cite[Proposition \ 6]{chambrion2}$ is incorrect and our formulation follows from $\cite[Proposition \ 2]{chambrion2}$. As a consequence of $(\ref{3linee})$, $1-|\la\phi_k,\G_{T_n}^{u_n}\phi_j\ra|\leq\frac{(1+2K\iii B\iii)(1+C')\iii B\iii I}{n}=:R_n$ and 
	\begin{equation*}\begin{split}
	&\sum_{l\neq k}|\la\phi_l,\G_{T_n}^{u_n}\phi_j\ra-\la\phi_l,\phi_k\ra|^2
	\leq\big(1-|\la\phi_k,\G_{T_n}^{u_n}\phi_j\ra|\big)\big(1+|\la\phi_k,\G_{T_n}^{u_n}\phi_j\ra|\big)\leq 2 R_n.\end{split}\end{equation*}
	Afterwards, fixed $n\in\N^*$, there exists $\theta\in\R$ (depending on $n$) such that $|\la\phi_k,e^{i\theta}\phi_k\ra-\la\phi_k,\G_{T_n}^{u_n}\phi_j\ra|^2\leq R_n^2$ and $R'_n:=\|e^{i\theta}\phi_k-\G_{T_n}^{u_n} \phi_j\|^2\leq{2R_n+R_n^2}.$
	As $|B_{j,k}|^{-1} \iii B\iii=\frac{\iii B\iii}{|\la\phi_j,B\phi_k\ra|}\geq 1$, we have
	\begin{equation*}\begin{split}
	&R_n=\frac{(1+2K\iii B\iii)(1+C')\iii B\iii I}{n}\leq\frac{3(1+C')|B_{j,k}|^{-1}\iii B\iii^2}{n|k^2-j^2|}.
	\end{split}\end{equation*}
	If $n\geq\frac{3(1+C')|B_{j,k}|^{-1}\iii B\iii^2}{|k^2-j^2|}$ for $j\neq k$, then $R_n\leq 1$, $R_n^2\leq R_n$ and
	\begin{equation}\label{cristoll}\begin{split}
	\|e^{i\theta}\phi_k-\G_{T_n}^{u_n}\phi_j\|^2\leq{2{R_n}+R_n^2}\leq{3R_n}\leq{\frac{3^2
			|B_{j,k}|^{-1}(1+C')\iii B\iii^2}{n|k^2-j^2|}}.\qedhere
	\end{split}
	\end{equation}
\end{proof}

\begin{prop}\label{approssimaH3}
	Let $B$ satisfy Assumptions I. Let $j,k\in\N^*$, $j\neq k,$ and $n\in\N^*$ verify the hypothesis of Theorem $\ref{mainteo}$. There exists $T_{n}\in (nT^*-T,nT^*+T)$ and $\theta\in\R$ such that
	\begin{equation*}\begin{split}
	\|\G_{T_n}^{u_n}\phi_j-e^{i\theta}\phi_k\|_{(3)}^8 \leq& \frac{2^{12} 3^{2}\pi^{24}|k^2-j^2|^5\max\{j,k\}^{24}(1+C')}{|B_{j,k}|^{7}n}\cdot\\
	&\frac{e^{\frac{6\iii B\iii_{(2)}}{|B_{j,k}|}}\iii B\iii_{(2)}^6\iii B\iii^2}{|B_{j,k}|^{7}n}.\\
	\end{split}\end{equation*}
\end{prop}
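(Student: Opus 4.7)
I would upgrade the $\Hi$-norm estimate of Proposition~\ref{approssimaH} to an $H^{3}_{(0)}$-norm estimate by interpolating it against an a priori bound in a higher Sobolev space. A H\"older inequality on the spectral coefficients $\la\phi_m,\psi\ra$ with weights $(\pi m)^{2s}$ yields the classical interpolation
$$\|\psi\|_{(3)}\leq\|\psi\|_{\Hi}^{1/4}\|\psi\|_{(4)}^{3/4},\qquad\text{i.e.}\qquad \|\psi\|_{(3)}^{8}\leq\|\psi\|_{\Hi}^{2}\|\psi\|_{(4)}^{6},$$
which matches exactly the structure of the claimed bound: the first factor carries the $n^{-1}$, $(1+C')$, $|B_{j,k}|^{-1}|k^{2}-j^{2}|^{-1}$ and $\iii B\iii^{2}$ dependence from Proposition~\ref{approssimaH}, while $\|\psi\|_{(4)}^{6}$ is precisely where the $\max\{j,k\}^{24}$ growth must come from, since $\|\phi_m\|_{(4)}=(\pi m)^{4}$.

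\textbf{Implementation.} First, I would reuse the exact $T_n\in(nT^{*}-T,nT^{*}+T)$ and phase $\theta\in\R$ produced by the proof of Proposition~\ref{approssimaH}, so that
$$\|\G_{T_n}^{u_n}\phi_j-e^{i\theta}\phi_k\|_{\Hi}^{2}\leq\frac{9\,(1+C')\,\iii B\iii^{2}}{n\,|B_{j,k}|\,|k^{2}-j^{2}|}$$
is in hand without further work. Then I would control $\|\G_{T_n}^{u_n}\phi_j-e^{i\theta}\phi_k\|_{(4)}$ through the triangle inequality: the contribution of $\phi_k$ gives $\|\phi_k\|_{(4)}=(\pi k)^{4}$ explicitly, while $\|\G_{T_n}^{u_n}\phi_j\|_{(4)}$ is estimated by Gronwall's inequality applied to the Duhamel representation. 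Assumptions~I combined with Remark~\ref{lauraa} supply $B\in L(H^{2}_{(0)})$, and the elementary bound $\|u_n\|_{L^{1}((0,T_n))}\leq T_n\cdot n^{-1}\leq 2T^{*}=2\pi/|B_{j,k}|$ keeps the Gronwall exponent of order $\iii B\iii_{(2)}/|B_{j,k}|$. Raising to the sixth power during the interpolation is exactly what yields the expected factors $e^{6\iii B\iii_{(2)}/|B_{j,k}|}$ and $\iii B\iii_{(2)}^{6}$, while multiplication by the Hilbert estimate gives the full right-hand side.

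\textbf{Main obstacle.} The delicate point is the $H^{4}_{(0)}$ control of $\G_{T_n}^{u_n}\phi_j$. A na\"ive Gronwall argument at the level of $H^{4}_{(0)}$ would require $\iii B\iii_{(4)}$, which Assumptions~I do not supply. To bypass this I would exploit the eigenfunction structure of the initial datum: applying $A^{2}$ to the Duhamel formula and using $A\phi_j=\lambda_j\phi_j$ converts two powers of $A$ acting on the source term into the explicit spectral constant $\lambda_j^{2}=\pi^{4}j^{4}$, so that the remaining contributions involve only commutators of the form $[A,B]$ (and iterated versions thereof) and can be estimated through $\iii B\iii_{(2)}$, the spectral gap $|k^{2}-j^{2}|$, and the smallness $\|u_n\|_{\infty}=n^{-1}$ together with the $\frac{2\pi}{|\lambda_k-\lambda_j|}$-periodicity of $u_n$ already exploited in \cite{chambrion2}. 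Organising this expansion so as to keep only $\iii B\iii_{(2)}$ (and not $\iii B\iii_{(4)}$) on the right-hand side is what produces the polynomial pre-factor $|k^{2}-j^{2}|^{5}\max\{j,k\}^{24}$ in the final estimate, and is the technical heart of the proof.
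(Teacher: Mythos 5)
Your overall architecture coincides with the paper's: the interpolation $\|f\|_{(3)}^{8}\leq\|f\|_{\Hi}^{2}\|f\|_{(4)}^{6}$ (obtained there via two applications of Cauchy--Schwarz on the spectral coefficients), the reuse of $T_n$ and $\theta$ from Proposition \ref{approssimaH}, and the triangle inequality $\|f_n\|_{(4)}\leq k^{4}+\|\G_{T_n}^{u_n}\phi_j\|_{(4)}$ are exactly the steps taken in part \textbf{2)} of the paper's proof. The problem lies in the step you yourself call the technical heart, namely the a priori $H^{4}_{(0)}$ bound on $\G_{T_n}^{u_n}\phi_j$: the commutator mechanism you sketch does not go through under Assumptions I. To write $A^{2}B\psi=BA^{2}\psi+[A^{2},B]\psi$ you need $B\psi\in D(A^{2})=H^{4}_{(0)}$, whereas Assumptions I only give $Ran(B|_{H^{3}_{(0)}})\subseteq H^{3}\cap H^{1}_{0}$; and even when the commutators make sense (e.g.\ $B=\mu$ a multiplication operator), $[A^{2},B]$ is a third-order differential operator whose norm involves four derivatives of $\mu$ and is not controlled by $\iii B\iii_{(2)}$. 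The spectral gap and the periodicity of $u_n$ cannot repair this: the obstruction is the spatial regularity of $B$, not the size or oscillation of the control.

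The paper avoids commutators entirely by invoking Kato's theory of evolution equations \cite{kato1}. One realizes $H^{4}$-regularity as membership in $\widehat H^{4}_{(0)}:=D(A(i\tilde\lambda-A))$ and observes that $S(t):=i\tilde\lambda-A-u_n(t)B$ is an isomorphism from $\widehat H^{4}_{(0)}$ onto $H^{2}_{(0)}$, via the Neumann series for $(I-u_n(t)B(i\tilde\lambda-A)^{-1})^{-1}$, which converges as soon as $\iii B\iii_{(2)}/(n\tilde\lambda)<1$; this uses only $B\in L(H^{2}_{(0)})$. Kato's estimate then propagates $\|S(t)U^{u_n}_{t}\phi_j\|_{(2)}\leq Me^{MN}\|S(0)\phi_j\|_{(2)}$, where $N$ is the total variation in time of $t\mapsto S(t)$ in $L(\widehat H^{4}_{(0)},H^{2}_{(0)})$ --- computed from the number of quarter-periods of $\cos(\pi^{2}(k^{2}-j^{2})t)$ contained in $[0,nT^{*}+T]$ --- and $M$ is the resolvent constant coming from the Neumann series; a dissipative shift by $-i\|u_n\|_{L^\infty}\iii B\iii_{(2)}$ together with Kato--Rellich and Hille--Yosida makes the relevant semigroups contractive, and the factor $j^{4}$ arises from $\|(A-i\tilde\lambda)\phi_j\|_{(2)}$. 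Replacing your commutator expansion by this BV-in-time argument is what is needed to close the gap; the rest of your outline then matches the paper.
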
 

\begin{proof}
	
	\needspace{3\baselineskip}
	
	\noindent
	{\bf 1) Propagation of regularity from $H^2_{(0)}$ to $H^4_{(0)}$:} We show that the propagator $\G_T^u$ preserves $H^4_{(0)}$ and $B\in L(H^2_{(0)})$. Let us denote
	$$\|f\|_{BV(T)}:=\|f\|_{BV((0,T),\R)}=\sup_{\{t_j\}_{ j\leq n}\in P}\sum_{j=1}^n |f(t_j)-f(t_{j-1})|$$
	for $f\in BV((0,T),\R)$, where $P$ is the set of the partitions of $(0,T)$ such that $t_0=0<t_1<...<t_n=T.$	
	Fixed $n\in\N^*$, we denote $$\lambda>0,\ \ \ \ \tilde\lambda=\lambda+\frac{\iii B\iii_{(2)}}{n},\ \ \ \ \ \widehat H^4_{(0)}:=D(A(i\tilde\lambda-A)).$$ We refer to $\cite{kato1}$ and we prove that the propagator $U^{u_n}_t$ generated by 
	$$A+u_n(t)B-i\|u_n\|_{L^\infty((0,T),\R)}\iii B\iii_{(2)}$$
	satisfies the condition $\|U^{u_n}_t\psi\|_{(4)}\leq C\|\psi\|_{(4)}$ for every $\psi\in H^4_{(0)}$ and suitable $C>0$. Indeed, $-i(A+u_n(t)B-i\|u_n\|_{L^\infty((0,T),\R)}\iii B\iii_{(2)})$ is not just dissipative in $H^2_{(0)}$, but also maximal dissipative thanks to Kato-Rellich's Theorem $\cite[Theorem\ 1.4.2]{relli}$. Now, Hille-Yosida Theorem implies that the semi-group generated by $-i(A+u_n(t)B-i\|u_n\|_{L^\infty((0,T),\R)}\iii B\iii_{(2)})$ is a semi-group of contraction and the techniques adopted in the proofs of $\cite[Theorem\ 2;\ Theorem \ 3]{kato1}$ are valid.
	As $\widetilde\lambda\geq \iii B\iii_{(2)}/n$, we have $$\iii u_n(t)B(i\tilde \lambda-A)^{-1}\iii_{(2)}\leq {\iii B\iii_{(2)}\iii( i\tilde\lambda- A)^{-1}\iii_{(2)}}{n^{-1}}\leq\frac{\iii B\iii_{(2)}}{n\widetilde \lambda}< 1.$$ We introduce $M:=\sup_{t\in [0,T_n]}\iii(i\tilde \lambda-A-u_n(t)B)^{-1}\iii_{L(H^2_{(0)},\widehat H^4_{(0)})}$ and
	\begin{equation}\label{rere}\begin{split}
	&M=\sup_{t\in [0,T_n]}\iii(I-u_n(t)B(i\tilde\lambda-A)^{-1})^{-1}\iii_{(2)}\\
	&=\sup_{t\in [0,T_n]}\iii\sum_{l=1}^{+\infty}(u_n(t)B(i\tilde \lambda-A)^{-1})^l\iii_{(2)}
	=\frac{n\widetilde\lambda}{n\widetilde\lambda-\iii B\iii_{(2)}}.\\
	\end{split}\end{equation} As $\|k+f(\cdot)\|_{BV((0,T),\R)}=\|f\|_{BV((0,T),\R)}$ for $f\in BV((0,T),\R)$ and $k\in\R$,
	\begin{equation*}\begin{split}N&:=\iii i\tilde \lambda-A-u_n(\cdot)B\iii_{BV\big([0,T_n],L(\widehat H^4_{(0)},H^2_{(0)})\big)}=n^{-1}\|u\|_{BV(T_n)} \iii B\iii_{L(\widehat H^4_{(0)},H^2_{(0)})}.\end{split}\end{equation*}
	Now, as $\|(A-i\tilde\lambda)\psi \|_{(2)}^2=\|A\psi \|_{(2)}^2+\tilde\lambda^2 \|\psi\|_{(2)}^2$, for every $\psi \in \widehat H^4_{(0)}$,
	$$	\|B\psi\|_{(2)}^2\leq \widetilde\lambda^{-2}\iii B\iii_{(2)}^2\big(\|A\psi \|_{(2)}^2+\tilde\lambda^2 \|\psi\|_{(2)}^2\big)
	\leq \widetilde\lambda^{-2}\iii B\iii_{(2)}^2\|\psi \|_{\widehat H^4_{(0)}}^2$$
	and $N\leq \frac{\iii B\iii_{(2)}\|u\|_{BV(T_n)}}{\widetilde\lambda n}$.
	Thanks to $\cite[Section\ 3.10]{kato1}$, there holds
	\begin{equation*}\begin{split}&\|(A+u_n(T_n)B-i\tilde\lambda )U_{T_n}^{u_n} \phi_j\|_{(2)}\leq Me^{MN}\|(A-i\tilde\lambda) \phi_j\|_{(2)}\leq Me^{MN}(\pi^2+{\tilde\lambda})j^4.\end{split}\end{equation*}
	In addition, thanks to the relation $(\ref{rere})$,
	\begin{equation*}\begin{split}
	&\iii A(A+u_n(T_n)B-i\tilde\lambda)^{-1}\iii_{(2)}\leq M +\iii\tilde\lambda(A-i\tilde\lambda)^{-1} \iii_{(2)} M\leq 2M.\\
	\end{split}\end{equation*}
	For every $j\in\N^*$, we know that
	$$\|\G_{T_n}^{u_n}\phi_j\|_{(4)}\leq e^{\frac{T_n}{n}\iii B\iii_{(2)}}\|U_t^{u_n}\phi_j\|_{(4)}=  e^{\frac{T_n}{n}\iii B\iii_{(2)}}2M^2e^{MN}(\pi^2+{\tilde\lambda})j^4. $$
	Now, $MN\leq \frac{\iii B\iii_{(2)}\|u\|_{BV(T_n)}}{n\widetilde\lambda-\iii B\iii_{(2)}}= \frac{\iii B\iii_{(2)}\|u\|_{BV(T_n)}}{n\lambda}$ and, if we choose $\lambda=\frac{\iii B\iii_{(2)}\|u\|_{BV(T_n)}}{n}$, then $MN\leq 1$ and $\widetilde \lambda=\frac{\iii B\iii_{(2)}(\|u\|_{BV(T_n)}+1)}{n}$. Now, $u$ is periodic and its total variation in a time interval long half-period is $2$. We compute $d$ the quarters of period for ${u}$ contained in $[0,nT^*+T]$ and  $$u_n(nT^*+T)=\frac{1}{n}\cos\big(\pi^2(k^2-j^2)(nT^*+T)\big)\ \ \ \Rightarrow \ \ \ d=\big(\pi^2|k^2-j^2|(nT^*+T)\big)\frac{2}{\pi}.$$
	As $d=2(n\pi^2|k^2-j^2|+{2}|B_{k,j}|)|B_{k,j}|^{-1}$, for the chosen $n$, we have $$\|u\|_{BV(T_n)}\leq\|u\|_{BV(nT^*+T)}\leq d+1\leq 2(n\pi^2|k^2-j^2|+{4}|B_{k,j}|)|B_{k,j}|^{-1},$$
	$$M^2=\Bigg(\frac{\iii B\iii_{(2)}(\|u\|_{BV(T_n)}+1)}{\iii B\iii_{(2)}\|u\|_{BV(T_n)}}\Bigg)^2\leq\Bigg(\frac{2(n\pi^2|k^2-j^2|+{6}|B_{k,j}|)|B_{k,j}|^{-1}}{2(n\pi^2|k^2-j^2|-{4}|B_{k,j}|)|B_{k,j}|^{-1}}\Bigg)^2,$$
	$$\pi^2+{\tilde\lambda}= \pi^2+\frac{\iii B\iii_{(2)}(\|u\|_{BV(T_n)}+1)}{n},$$
	\begin{equation}\label{bombissima1}\begin{split}
	\|\G_{T_n}^{u_n}\phi_j\|_{(4)}&\leq e^{\frac{T_n}{n}\iii B\iii_{(2)}}2\Big(\frac{\iii B\iii_{(2)}(\|u\|_{BV(T_n)}+1)}{\iii B\iii_{(2)}\|u\|_{BV(T_n)}}\Big)^2e (\pi^2+{\tilde\lambda})j^4\\
	&\leq e^{\frac{\iii B\iii_{(2)}}{|B_{j,k}|}+\frac{2\iii B\iii_{(2)}}{n\pi|k^2-j^2|}+1}2 \Big(\frac{\iii B\iii_{(2)}(\|u\|_{BV(T_n)}+1)}{\iii B\iii_{(2)}\|u\|_{BV(T_n)}}\Big)^2 (\pi^2+{\tilde\lambda})j^4.\\\end{split}\end{equation}
	\medskip
	
	\needspace{3\baselineskip}
	
	\noindent
	{\bf 2) Conclusion:} Let $f_n:=e^{i\theta}\phi_k-\G_{T_n}^{u_n} \phi_j$. First, we point out that, for every $s>0$,  we have $\|f_n\|^2_{(s)}\leq( k^s+\|\G_{T_n}^{u_n} \phi_j\|_{(s)})^2.$ As $\phi_j,\phi_k\in H^s_{(0)}$, for every $s>0$, the point {\bf 1)} ensures that $\G_T^u\phi_j$ and $\G_T^u\phi_j$ belong to $H^4_{(0)}$ for $u\in BV(0,T)$. Thanks to the Cauchy-Schwarz inequality,
	$$\|A^\frac{3}{2}f_n\|^4=\big(\la A^\frac{3}{2}f_n,A^\frac{3}{2}f_n\ra\big)^2\leq\big(\la A^2f_n, Af_n\ra\big)^2\leq\|A^2f_n\|^2\|Af_n\|^2,$$
	$$\|Af_n\|^2=\la Af_n, Af_n\ra\leq\|A^2f_n\|\|f_n\|,\ \ \ \ \ \Longrightarrow \ \ \ \ \|f_n\|_{(3)}^8\leq\|f_n\|^2\|f_n\|_{(4)}^6.$$
	For $R_n$ defined in the proof of Proposition $\ref{approssimaH}$, the relation $(\ref{bombissima1})$ implies
	\begin{equation*}\begin{split}
	&\|f_n\|_{(3)}^8 \leq  3R_n(\|\G_{T_n}^{u_n}\phi_j\|_{(4)}+k^4)^6\leq\frac{3^2
		|B_{j,k}|^{-1}(1+C')\iii B\iii^2}{n|k^2-j^2|}\cdot\\
	&\cdot\Big(e^{\frac{\iii B\iii_{(2)}}{|B_{j,k}|}+\frac{2\iii B\iii_{(2)}}{n\pi|k^2-j^2|}+1}2 \Big(\frac{\iii B\iii_{(2)}(\|u\|_{BV(T_n)}+1)}{\iii B\iii_{(2)}\|u\|_{BV(T_n)}}\Big)^2 (\pi^2+{\tilde\lambda})j^4+k^4\Big)^6\\
	&\leq \frac{2^{12} 3^{2}\pi^{24}(1+C')e^{\frac{6\iii B\iii_{(2)}}{|B_{j,k}|}} \iii B\iii_{(2)}^6\iii B\iii^2|k^2-j^2|^5\max\{j,k\}^{24}}{n|B_{j,k}|^{7}}.\qedhere
	\end{split}\end{equation*}
\end{proof}

\begin{prop}\label{approssimaH3T}
	Let $B$ satisfy Assumptions I and $n\in\N^*$ introduced in Theorem\ $\ref{mainteo}$. For every $j,k\in\N^*$ such that $j\neq k,$ there exists $\theta\in\R$ so that 
	\begin{equation*}\begin{split}
	\|\G_{nT^*}^{u_n}\phi_j-e^{i\theta}\phi_k\|_{(3)}^8\leq&\frac{2^{20} 3^{2}\pi^{24}e^{\frac{6\iii B\iii_{(2)}}{B_{j,k}}}|k^2-j^2|^5
		\max\{j,k\}^{24}}{|B_{j,k}|^{7}n}\cdot\\
	&\frac{(1+C') \iii B\iii_{(2)}^6\iii B\iii \max\{\iii B\iii,\iii B\iii_3\}}{|B_{j,k}|^{7}n}.\\
	\end{split}\end{equation*}
\end{prop}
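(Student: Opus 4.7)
The strategy is to reduce the claim to Proposition \ref{approssimaH3} by a short-time Duhamel correction. Proposition \ref{approssimaH3} already produces an intermediate time $T_n\in(nT^*-T,nT^*+T)$ and a phase $\theta'\in\R$ for which $\|\G_{T_n}^{u_n}\phi_j-e^{i\theta'}\phi_k\|_{(3)}^8$ is controlled by the displayed RHS of that proposition, which differs from the target bound by a factor $2^8$ and by the replacement of $\iii B\iii^2$ by $\iii B\iii\max\{\iii B\iii,\iii B\iii_3\}$. It only remains to migrate this estimate from $T_n$ to the exact time $nT^*$ at a cost consistent with those two discrepancies.

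To this end, set $\tau:=nT^*-T_n\in(-T,T)$ with $T=2\pi/|\lambda_k-\lambda_j|$. Duhamel's formula gives
\[
\G_{nT^*}^{u_n}\phi_j=e^{-iA\tau}\G_{T_n}^{u_n}\phi_j+R,\qquad R:=-i\int_{T_n}^{nT^*}e^{-iA(nT^*-s)}u_n(s)B\G_s^{u_n}\phi_j\,ds.
\]
Choosing $\theta:=\theta'-\lambda_k\tau$ so that $e^{-iA\tau}e^{i\theta'}\phi_k=e^{i\theta}\phi_k$, and using that $e^{-iA\tau}$ is an isometry on $H^3_{(0)}$, the triangle inequality yields
\[
\|\G_{nT^*}^{u_n}\phi_j-e^{i\theta}\phi_k\|_{(3)}\leq\|\G_{T_n}^{u_n}\phi_j-e^{i\theta'}\phi_k\|_{(3)}+\|R\|_{(3)}.
\]
For $R$ I would apply the inequality (\ref{russo}) (valid thanks to Remark \ref{lauraa}), obtaining $\|R\|_{(3)}\leq C(T)\|u_n\|_{L^2(T_n,nT^*)}\iii B\iii_3\|\G_s^{u_n}\phi_j\|_{L_t^\infty H_x^3}$, with $C(T)$ a fixed constant since $T$ depends only on $j,k$, and $\|u_n\|_{L^2(T_n,nT^*)}\leq\sqrt{T}/n$.

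For the $L_t^\infty H_x^3$-bound on $\G_s^{u_n}\phi_j$ I would recycle the $H^4$-propagation estimate proved in step {\bf 1)} of Proposition \ref{approssimaH3} (built on Kato-Rellich's and Hille-Yosida's theorems), which is uniform on $[0,nT^*+T]$ and already produces the exponential factor $e^{6\iii B\iii_{(2)}/|B_{j,k}|}$ and the polynomial $\max\{j,k\}^{24}$ appearing in the target bound; the $H^3$ norm is then dominated by the $H^4$ norm. This shows $\|R\|_{(3)}=O(1/n)$, so that $\|R\|_{(3)}^8$ is of order $1/n^8$ and negligible compared to the order-$1/n^2$ RHS of Proposition \ref{approssimaH3}.

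Finally, raising the triangle inequality to the eighth power via $(a+b)^8\leq 2^8(a^8+b^8)$ yields the factor $2^{20-12}=2^8$, while the appearance of $\iii B\iii_3$ (in place of a second $\iii B\iii$) in the Duhamel estimate for $R$ accounts exactly for the replacement $\iii B\iii^2\leftrightarrow\iii B\iii\max\{\iii B\iii,\iii B\iii_3\}$. The hard part is purely the book-keeping of all the explicit constants so that they reproduce the stated form; no new analytic input beyond the $H^4$-propagation of Proposition \ref{approssimaH3} is required.
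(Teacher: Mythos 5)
Your proposal is correct and follows essentially the same route as the paper: invoke Proposition \ref{approssimaH3} at the intermediate time $T_n$, control the passage from $T_n$ to $nT^*$ by a Duhamel term of size $O(1/n)$ using the $H^4_{(0)}$-propagation bound from step \textbf{1)} of that proposition, and combine via $(a+b)^8\leq 2^{7}(a^8+b^8)$. If anything, your explicit phase rotation $\theta=\theta'-\lambda_k\tau$ absorbing the free evolution $e^{-iA\tau}$ is a cleaner treatment than the paper's, which bounds $\sup_t\|\G_t^{u_n}\phi_j-\G_{T_n}^{u_n}\phi_j\|_{(3)}$ directly and leaves the free-evolution drift implicit.
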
 
\begin{proof}
	We notice that the hypotheses of Proposition $\ref{approssimaH3}$ are verified. We estimate $\sup_{t\in [nT^*-T,nT^*+T]}\|\G_{t}^{u_n}\phi_j-\G_{T_n}^{u_n}\phi_j\|_{(3)}$ and we consider the arguments leading to $(\ref{bombissima1})$. The uniformly bounded constant $C(\cdot)$ is increasing and $(\ref{orcod})$ implies $\sup_{t\in [nT^*-T,nT^*+T]}C(|t-T_n|)\leq C(2T)\leq C({4}/{\pi})=\frac{24\sqrt{2}}{\pi^{4}}.$
	Thanks to Proposition $\ref{laura}$ and Remark $\ref{lauraa}$,
	\begin{equation*}\begin{split}
	&\sup\Big\{\sup_{t\in [nT^*-T,T_n]}\|\G_{t}^{u_n}\phi_j-\G_{T_n-t}^{u_n}\G_{t}^{u_n}\phi_j\|_{(3)},\sup_{t\in [T_n,nT^*+T]}\|\G_{t-T_n}^{u_n}\G_{T_n}^{u_n}\phi_j-\G_{T_n}^{u_n}\phi_j\|_{(3)}\Big\}\\
	&\leq C\Big(\frac{4}{\pi}\Big)\iii B\iii_3 \int^{nT^*+T}_{nT^*-T}|u_n(s)|ds\ \sup\Big\{\|\G_{T_n}^{u_n}\phi_j\|_{(4)},\sup_{t\in [nT^*-T,T_n]}\|\G_{t}^{u_n}\phi_j\|_{(4)}\Big\}.\\
	\end{split}
	\end{equation*}
	The techniques adopted in the proof of Proposition $\ref{approssimaH3}$ lead to 
	\begin{equation*}\begin{split}
	&\sup_{t\in [nT^*-T,T_n]}\|\G_{t}^{u_n}\phi_j\|_{(4)}
	\leq 2^2e^{\frac{\iii B\iii_{(2)}}{|B_{j,k}|}} \iii B\iii_{(2)}\pi^3|k^2-j^2||B_{k,j}|^{-1}j^4,\\
	\end{split}\end{equation*}
	which implies
	\begin{equation*}\begin{split}
	&\sup_{t\in (nT^*-T,nT^*+T)}\|\G_{t}^{u_n}\phi_j-\G_{T_n}^{u_n}\phi_j\|_{(3)}\leq C\Big(\frac{4}{\pi}\Big)\iii B\iii_3 \frac{2T}{n} \pi^{5}2^2e^{\frac{\iii B\iii_{(2)}}{|B_{j,k}|}} \iii B\iii_{(2)}\\
	&|k^2-j^2||B_{k,j}|^{-1}j^4\leq 2^5\ 3\ \sqrt{2}e^{\frac{\iii B\iii_{(2)}}{|B_{j,k}|}}\frac{\iii B\iii_3 }{n} 
	\iii B\iii_{(2)}|B_{j,k}|^{-1}j^4.\\
	\end{split}\end{equation*}
	Now, for $R''_n:=\|\G_{nT^*}^{u_n}\phi_j-e^{i\theta}\phi_k\|_{(3)}^8$, 
	\begin{equation}\label{R}\begin{split}
	&R''_n \leq 2^7\sup_{t\in (nT^*-T,nT^*+T)}\|\G_{t}^{u_n}\phi_j-\G_{T_n}^{u_n}\phi_j\|_{(3)}^8 +2^7\|\G_{T_n}^{u_n}\phi_j-e^{i\theta}\phi_k\|_{(3)}^8\\
	&\leq 2^7\Big(2^5\ 3\ \sqrt{2}e^{\frac{\iii B\iii_{(2)}}{|B_{j,k}|}}\frac{\iii B\iii_3 }{n} 
	\iii B\iii_{(2)}|B_{j,k}|^{-1}j^4\Big)^8+2^7\|f_n\|_{(3)}^8.\\
	\end{split}\end{equation}
	Now, $\iii B\iii,\iii B\iii_{(2)}\geq|B_{j,k}|$ for $j,k\in\N^*$. For the chosen $n\in\N^*$, we have 
	\begin{equation*}\begin{split}
	R''_n &\leq 2^7
	\left(2^5\ 3\ \sqrt{2}e^{\frac{\iii B\iii_{(2)}}{|B_{j,k}|}}\frac{\iii B\iii_3 }{n} 
	\iii B\iii_{(2)}|B_{j,k}|^{-1}j^4+\|f_n\|_{(3)}^8\right)\\
	&\leq\frac{2^{20} 3^{2}\pi^{24}(1+C')e^{\frac{6\iii B\iii_{(2)}}{|B_{j,k}|}} \iii B\iii_{(2)}^6\iii B\iii^2|k^2-j^2|^5\max\{j,k\}^{24}}{n|B_{j,k}|^{7}}.\qedhere
	\end{split}\end{equation*}
\end{proof}

\section{Appendix: Moment problem}\label{appmome1}

In this appendix, we briefly adapt some results concerning the solvability of the moment problem (as the relation $(\ref{mome})$). Let $\cite[Proposition\ 19; \ 2)]{laurent}$ be satisfied and $\{f_k\}_{k\in\Z}$ be a Riesz Basis (see $\cite[Definition\ 2]{laurent}$) in 
$$X=\overline{span\{f_k:\ k\in\Z\}}^{\ \Hi}\subseteq \Hi,$$
with $\Hi$ an Hilbert space. For $\{v_k\}_{k\in\Z}$ the unique biorthogonal family to $\{f_k\}_{k\in\Z}$ ($\cite[Remark\ 7]{laurent}$), $\{{v}_k\}_{k\in\Z}$ is also a Riesz Basis of $X$ ($\cite[Remark\ 9]{laurent}$). If $\{f_k\}_{k\in\Z}$ is the image of an orthonormal family $\{e_k\}_{k\in\Z}\subset\Hi$ by an isomorphism $V:\Hi\rightarrow \Hi$, then $\{v_k\}_{k\in\Z}$ is the image of $\{e_k\}_{k\in\Z}\subset\Hi$ by the isomorphism $(V^*)^{-1}:\Hi\rightarrow \Hi$. Indeed, 
$$\delta_{k,j}=\la v_k,f_j\ra_{\Hi}=\la v_k,V(e_j)\ra_{\Hi}=\la V^*(v_k),e_j\ra_{\Hi},\ \ \ \ \ \forall k,j\in\Z,$$
that implies $(V^*)^{-1}(e_k)=v_k$ for every $k\in\Z$. We point out that in $\cite[relation\ (71)]{laurent}$ there is a misprint as there exist $C_1,C_2>0$ such that
\begin{equation}\label{cris}C_1\sum_{k\in \Z}|x_k|^2\leq\|u\|_{\Hi}^2\leq C_2\sum_{k\in \Z}|x_k|^2,\end{equation}
for every $u(t)=\sum_{k\in  \Z}x_k f_k(t)$ with $\{x_k\}_{k\in\N^*}\in\ell^2(\C)$.
The arguments of the proof of $\cite[Proposition\ 19; \ 2)]{laurent}$ and the relations
$$(V^*)^{-1}=(V^{-1})^*,\ \ \ \iii V^*\iii_{L(\Hi)}=\iii V\iii_{L(\Hi)},\ \  \ \iii (V^{-1})^*\iii_{L(\Hi)}=\iii V^{-1}\iii_{L(\Hi)}$$
implies that, for every $u(t)=\sum_{k\in  \Z}x_k v_k(t)$ with $\{x_k\}_{k\in\N^*}\ell^2(\C)$, we have $$C_2^{-2}\sum_{k\in \Z}|x_k|^2\leq\|u\|_{\Hi}^2\leq C_1^{-2}\sum_{k\in \Z}|x_k|^2.$$
The constants $C_1, C_2>0$ are the same appearing in $(\ref{cris})$.
Moreover, for every $u\in X$, we know that $u=\sum_{k\in\Z}v_k \la f_k,u\ra_{\Hi}$ since $\{f_k\}_{k\in\Z}$ and $\{v_k\}_{k\in\Z}$ are reciprocally biorthonormal (see $\cite[Remark\ 9]{laurent}$) and
\begin{equation}\label{weirdo}C_2^{-1}\Big(\sum_{k\in \Z}|\la f_k,u\ra_{\Hi}|^2\Big)^{\frac{1}{2}}\leq\|u\|_{\Hi}\leq C_1^{-1}\Big(\sum_{k\in \Z}|\la f_k,u\ra_{\Hi}|^2\Big)^{\frac{1}{2}}.\end{equation}

\begin{osss}\label{pu}
	
	Let $\{\lambda_k\}_{k\in\N^*}\subset\R^+$ be so that $\Gi:=\inf_{k\neq j}|\omega_{k}-\omega_{j}|> 0$. Thanks to the Ingham's Theorem $\cite[Theorem \ 4.3]{Ing}$, for $T>\frac{2\pi}{\Gi}$, the family of functions $\{e^{i\lambda_k (\cdot)}\}_{k\in\Z}$ is a Riesz Basis in $X=\overline{span\{e^{i\lambda_k (\cdot)}:\ k\in\Z\}}^{\ L^2}$. In the current remark, we consider $\Hi=L^2((0,T),\C).$ From \eqref{weirdo}, we have
	\begin{equation}\label{weirdost}C_2^{-1}\Big(\sum_{k\in \Z}|\la e^{i\lambda_k (\cdot)},u\ra_{\Hi}|^2\Big)^{\frac{1}{2}}\leq\|u\|_{\Hi}\leq C_1^{-1}\Big(\sum_{k\in \Z}|\la e^{i\lambda_k (\cdot)},u\ra_{\Hi}|^2\Big)^{\frac{1}{2}}.\end{equation}
	The relation $(\ref{weirdost})$ ensures that $F:u\in X\longmapsto \big\{\la e^{i\lambda_k (\cdot)},u\ra_{\Hi}\big\}_{k\in\Z}\in\ell^2(\C)$ is injective. Let $\{v_k\}_{k\in\Z}$ be the unique biorthogonal family to $\{e^{i\lambda_k (\cdot)}\}_{k\in\Z}$. The surjectivity of the map $F$ follows as, for every $\{x_k\}_{k\in\Z}\in \ell^2(\C)$ and $N\in\N^*$, $$u^N(t)= \sum_{k\leq N}v_k x_k\in X,\ \ \ \ \ \big\{\la e^{i\lambda_k (\cdot)},u^N\ra_{\Hi}\big\}_{k\leq N}=\{x_k\}_{k\leq N}.$$ Since $\big\{\{x_k\}_{k\leq N}\big\}_{N\in\N^*}$ is a Cauchy sequence, $\{u^N\}_{N\in\N^*}$ is also a Cauchy sequence in $L^2((0,T),\R)$ thanks to $(\ref{weirdost})$. The completeness of $X$ implies
	$$u(t)= \sum_{k\in\N^*} v_k x_k\in X,\ \ \ \ \ \big\{\la e^{i\lambda_k (\cdot)},u\ra_{\Hi}\big\}_{k\in \N^*}=\{x_k\}_{k\in\N^*}.$$ Thus, $F:u\in X\longmapsto \big\{\la e^{i\lambda_k (\cdot)},u\ra_{\Hi}\big\}_{k\in\Z}\in\ell^2(\C)$ is an homeomorphism and, for every $\{x_k\}_{k\in\Z}\in \ell^2(\C)$, there exists a unique $u\in X$ such that
	\begin{equation*}x_k=\int_0^Tu(s)e^{-i\lambda_k s}ds,\ \ \ \ \ \ \ \forall k\in\Z.\end{equation*}
\end{osss}

\begin{lemma}\label{solve112}
	Let $\{\mu_k\}_{k\in\N^*}=\{\pi^2(k^2-l^2)\}_{k\in\N^*}$ for $l\in\N^*$ such that
	\begin{equation}\label{sborrato}\mu_{-k}=\pi^2(k^2-l^2)\neq \pi^2(l^2-j^2)=-\mu_j,\ \ \ \ \ \ \ \forall k,j\in\N^*.\end{equation}
	For $T>2/\pi$, for every $\{x_k\}_{k\in\N^*}\in\ell^2(\C)$ such that $x_{l}\in\R$, 
	\begin{equation*}\begin{split}
	\exists u\in L^2((0,T),\R)\ \ \ \ \  :\ \  \ \ \ x_k=\int_0^Tu(s)e^{i\mu_k s}ds, \ \ \  \ \ \ \forall k\in\N^*.\\
	\end{split}\end{equation*}
	In addition, there exists $X\subseteq L^2((0,T),\R)$ such that the map $$\widetilde J:u\in X\longmapsto \{\la u,e^{i\mu_k(\cdot)}\ra\}_{k\in\N^*}\in \{\{x_k\}_{k\in\N^*}\in\ell^2(\C):\ x_{l}\in\R\}$$ is an homeomorphism.
\end{lemma}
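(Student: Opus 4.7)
The strategy is to reduce the one-sided moment problem to a two-sided one indexed by a symmetric subset of $\Z$, apply Remark~\ref{pu} to get a complex-valued solution, and then extract a real-valued solution via a conjugation-symmetry argument.

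First, I would extend the index set to $\mathcal I := \N^* \cup \{-k : k \in \N^*,\ k \neq l\}$ and set $\mu_{-k} := -\mu_k$ for $k \in \N^* \setminus \{l\}$ (avoiding double-counting the frequency $\mu_l = 0$). Hypothesis \eqref{sborrato} guarantees that $\{\mu_k\}_{k \in \mathcal I}$ are pairwise distinct, and because all $\mu_k$ lie in $\pi^2 \Z$, the gap $\inf_{k \neq j}|\mu_k - \mu_j|\geq \pi^2$ is uniform. Thus the Ingham parameter in Remark~\ref{pu} is $\Gi = \pi^2$, so the assumption $T > 2/\pi = 2\pi/\Gi$ lets Remark~\ref{pu} apply (up to the obvious sign change between $e^{i\lambda_k\cdot}$ there and $e^{i\mu_k\cdot}$ here): the family $\{e^{i\mu_k\cdot}\}_{k\in\mathcal I}$ is a Riesz basis of $X_0 := \overline{\spn\{e^{i\mu_k\cdot} : k \in \mathcal I\}}^{L^2}$, and
$$F : X_0 \longrightarrow \ell^2(\mathcal I, \C),\qquad F(u) := \Bigl\{\int_0^T u(s)\, e^{i\mu_k s}\,ds\Bigr\}_{k\in\mathcal I}$$
is a homeomorphism.

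Second, given data $\{x_k\}_{k \in \N^*} \in \ell^2(\C)$ with $x_l \in \R$, I would extend it by conjugate symmetry: set $\tilde x_k := x_k$ for $k \in \N^*$ and $\tilde x_{-k} := \overline{x_k}$ for $k \in \N^* \setminus \{l\}$. Then $\{\tilde x_k\}_{k \in \mathcal I} \in \ell^2(\mathcal I, \C)$, and $u := F^{-1}(\tilde x) \in X_0$ solves $\int_0^T u(s) e^{i\mu_k s} ds = x_k$ for every $k \in \N^*$ by construction.

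Third, I would show that $u$ is real-valued, which is the conceptual heart of the proof. For any $k \in \mathcal I$,
$$\int_0^T \overline{u(s)}\, e^{i\mu_k s}\,ds = \overline{\int_0^T u(s)\, e^{-i\mu_k s}\,ds},$$
and this equals $\overline{\tilde x_{-k}} = \tilde x_k$ whenever $-k \in \mathcal I$ (by the symmetry built into $\tilde x$), while for $k = l$ it equals $\overline{x_l} = x_l$ because $x_l \in \R$. Hence $\overline{u}$ satisfies the same moment system as $u$, and injectivity of $F$ forces $u = \overline u$, i.e.\ $u \in L^2((0,T),\R)$.

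Finally, for the homeomorphism claim I would set $X := X_0 \cap L^2((0,T),\R)$; the previous step identifies $X$ with the $F$-preimage of the conjugation-symmetric sequences, and $\widetilde J$ is the restriction of $F$ to $X$ composed with the projection onto the $\N^*$-components. Surjectivity onto $\{(x_k)_{k\in\N^*} \in \ell^2(\C) : x_l \in \R\}$ was established above, injectivity follows from injectivity of $F$, and bicontinuity is inherited from the Ingham/Riesz-basis estimate \eqref{cris}. The main obstacles are purely bookkeeping: reconciling the sign convention of Remark~\ref{pu} with the $e^{i\mu_k s}$ convention of the lemma, and handling the frequency $\mu_l = 0$ carefully so that it is not double-indexed when symmetrizing.
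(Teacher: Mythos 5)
Your proposal is correct and follows essentially the same route as the paper: symmetrize the frequencies over a two-sided index set omitting $-l$, invoke Ingham's theorem (via Remark~\ref{pu}) with gap $\Gi=\pi^2$ to solve the extended moment problem, impose conjugate symmetry on the data to force the solution to be real, and read off the homeomorphism from the Riesz-basis norm equivalence. The only cosmetic difference is that the paper exhibits $u$ and the space $X$ explicitly through the biorthogonal family $\{v_k\}$, whereas you characterize $X$ as the real subspace of the closed span, which amounts to the same thing.
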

\begin{proof}
	For $k> 0$, we call $\omega_k=-\mu_{k}$, while we impose $\omega_k=\mu_{-k}$ for $k<0$ and $k\neq -l$. 
	We denote $\Z^*=\Z\setminus\{0\}$. 
	The sequence $\{\omega_k\}_{k\in \Z^*\setminus\{-l\}}$ satisfies the hypotheses of $\cite[Theorem \ 4.3]{Ing}$ as $\Gi:=\inf_{k\neq j}|\omega_{k}-\omega_{j}|\geq\pi^2$ thanks to the relation $(\ref{sborrato})$. Thus, Remark $\ref{pu}$ is valid.
	Given $\{x_k\}_{k\in\N^*}\in \ell^2(\C)$, we introduce $\{\widetilde x_k\}_{k\in \Z^*\setminus\{-l\}}\in \ell^2(\C)$ such that $\widetilde x_k=x_{k}$ for $k> 0$, while $\widetilde x_k=\overline x_{-k}$ for $k<0$ and $k\neq -l$. 
	For $T>2\pi/\Gi$, there exists $u\in L^2((0,T),\C)$ so that $ \widetilde x_k=\int_0^Tu(s)e^{-i\omega_k s}ds$ for each $k\in \Z^*\setminus\{-l\}$. Then
	\begin{equation*}\begin{split}\begin{cases}
	x_k=\int_0^Tu(s)e^{i\mu_k s}ds=\int_0^T\overline{u}(s)e^{i\mu_k s}ds,\ \ \ \ \ \ \ \ & k\in\N^*\setminus\{l\},\\
	x_k=\int_0^Tu(s)ds,\ \ \ \ \ \ \ \ & k= l,\\
	\end{cases}
	\end{split}\end{equation*}
	which implies that $u$ is real when $x_{l}\in\R$. For $\{v_k\}_{k\in\N^*}$ the biorthogonal family to $\{e^{i\mu_k(\cdot)}\}_{k\in\N^*}$,
	we have $v_{l}\in\R$ and $\{\overline{v}_k\}_{k\in\N^*}$ is the biorthogonal family to $\{e^{-i\mu_k(\cdot)}\}_{k\in\N^*}$. Thus, $u(t)=\sum_{k\in\N^*} \widetilde{x}_k v_k(t)+\sum_{k\in\N^*\setminus\{l\}} \widetilde{x}_{-k} \overline{v_k}(t)= x_{l}v_{l}(t)+2\sum_{k\in\N^*\setminus{l}} \Re(x_{k} {v_k}(t))$ and $(\ref{weirdost})$ leads to
	\begin{equation}\label{weirdos}\begin{split}
	&C_2^{-1}\Big(\sum_{k\in \N^*}|x_k|^2\Big)^\frac{1}{2}\leq\|u\|_{2}\leq 2C_1^{-1}\Big(\sum_{k\in \N^*}|x_k|^2\Big)^\frac{1}{2}.\end{split}\end{equation}
	For ${\bf x}:=\{x_k\}_{k\in\Z^*\setminus\{-l\}}$ belonging to $\ell^2_{l}(\C):=\{\{x_k\}_{k\in\Z^*\setminus\{-l\}}: \{x_k\}_{k\in\N^*}\in\ell^2(\C);\ x_{-k}=\overline{x_k},\ \forall k\in -\N^* \setminus\{-l\};\ x_{l}\in\R\}$, we define
	\begin{equation*}\begin{split}
	u_{\bf x}(t)=x_{l}v_{l}+2\sum_{k\in\N^*\setminus\{l\}} \Re(x_{k} {v_k}),\ \ \ \ \ \ \ \ 	X:=\{u_{\bf x}:\ {\bf x}\in \ell^2_{l}(\C) \}.\\
	\end{split}\end{equation*} 
	From $(\ref{weirdos})$, $J:u\in X\longmapsto \{\la u,e^{i\omega_k(\cdot)}\ra\}_{k\in\Z^*\setminus\{-l\}}\in \ell^2_{l}(\C)$ is an homeomorphism (for $\{\omega_k\}_{k\in\N^*}$ defined above), which implies the result.\qedhere
\end{proof}

\end{document}